%% LyX 1.5.2 created this file.  For more info, see http://www.lyx.org/.
%% Do not edit unless you really know what you are doing.
\documentclass{stacs_proc}
\usepackage[T1]{fontenc}
\usepackage[latin1]{inputenc}
\usepackage{verbatim}
\usepackage{float}
\usepackage{graphicx}
\usepackage{amssymb}

\makeatletter
%%%%%%%%%%%%%%%%%%%%%%%%%%%%%% User specified LaTeX commands.
%% LyX 1.5.2 created this file.  For more info, see http://www.lyx.org/.
%% Do not edit unless you really know what you are doing.

%%%%%%%%%%%%%%%%%%%%%%%%%%%%%% User specified LaTeX commands.
%\newcommand{\qed}{\rule{6pt}{6pt}}
%\newtheorem{observation}{Observation}
%\newtheorem{lemma}{Lemma}
%\newtheorem{corollary}{Corollary}
%\newtheorem{theorem}{Theorem}
%\newenvironment{proof}{\noindent{\bf Proof:}\hspace*{1em}}{\qed\medskip}
\newenvironment{proofsketch}{\noindent{\it Proof Sketch.}\hspace*{0.75em}}{\qed\medskip}
%\newenvironment{proofsketch}{\noindent{\it Proof Sketch. }}{\qed\medskip}

% Comment on:

% Comment off:
%\def\carola#1{}

\def\R{\mathbb{R}}

%\renewcommand{\thefootnote}{\arabic{footnote}}%Arabic footnotes ON.

%\renewcommand{\thefootnote}{\fnsymbol{footnote}}%Symbol footnotes ON.
%\long\def\symbolfootnote[#1]#2{\begingroup%
%\def\thefootnote{\fnsymbol{footnote}}\footnote[#1]{#2}\endgroup} 

%%%%%%%%%%%%%%%%%%%%%%%%%%%%%%
\title{Geodesic Fréchet Distance Inside a Simple Polygon}
\thanks{The full version of this paper is available as a technical report \cite{Cook2007}.}
%\symbolfootnote[1]{}
\author[]{A.F. Cook}{Atlas F. Cook IV}%
\author[]{C. Wenk}{Carola Wenk}%
\address[]{Department of Computer Science, University of Texas at San Antonio\newline One UTSA Circle, San Antonio, TX 78249-0667}%
\email{{acook,carola}@cs.utsa.edu}%
\thanks{This work has been supported by the National Science Foundation
   grant NSF CAREER CCF-0643597.}%
\keywords{Fréchet Distance, Geodesic, Parametric Search, Simple Polygon}%
\subjclass{Computational Geometry}%

%%%%%%%%%%%%%%%%%%%%%%%%%%%%%%%%%%%%%%%%%%%%%%%%%%%%%%%%%%%%%%%%%%%%%%%%%%% 

%\def\squareforqed{\hbox{\rlap{$\sqcap$}$\sqcup$}}\def\qed{\ifmmode\squareforqed\else{\unskip\nobreak\hfil\penalty50\hskip1em\null\nobreak\hfil\squareforqed\parfillskip=0pt\finalhyphendemerits=0\endgraf}\fi}

\makeatother

\makeatother

\begin{document}
\begin{abstract}
We unveil an alluring alternative to parametric search that applies
to both the non-geodesic and geodesic Fréchet optimization problems.
This randomized approach is based on a variant of red-blue intersections
and is appealing due to its elegance and practical efficiency when
compared to parametric search.

We present the first algorithm for the geodesic Fréchet distance between
two polygonal curves $A$ and $B$ inside a simple bounding polygon
$P$. The geodesic Fréchet decision problem is solved almost as fast
as its non-geodesic sibling and requires $O(N^{2}\log k)$ time and
$O(k+N)$ space after $O(k)$ preprocessing, where $N$ is the larger
of the complexities of $A$ and $B$ and $k$ is the complexity of
$P$. The geodesic Fréchet optimization problem is solved by a randomized
approach in $O(k+N^{2}\log kN\log N)$ expected time and $O(k+N^{2})$
space. This runtime is only a logarithmic factor larger than the standard
non-geodesic Fréchet algorithm \cite{Alt1995}. Results are also presented
for the geodesic Fréchet distance in a polygonal domain with obstacles
and the geodesic Hausdorff distance for sets of points or sets of
line segments inside a simple polygon $P$. 
\end{abstract}
\maketitle

\stacsheading{2008}{193-204}{Bordeaux}
\firstpageno{193}

\section{Introduction\label{sec:Introduction}}

%This section reviews the approach commonly used to compute the \emph{non-geodesic}
%Fréchet distance, discusses related work, and outlines the rest of
%the paper. This information serves as essential background for computing
%the geodesic Fréchet distance.

%\subsection{Fréchet distance background}

The comparison of geometric shapes is essential in various applications
including computer vision, computer aided design, robotics, medical
imaging, and drug design. The Fréchet distance is a similarity metric
for continuous shapes such as curves or surfaces which is defined
using reparametrizations of the shapes. Since it takes the continuity
of the shapes into account, it is generally a more appropriate distance
measure than the often used Hausdorff distance. %Applications of the Fréchet distance include
%map matching \cite{Wenk2006} and shape similarity. 
The Fréchet distance for curves is commonly illustrated by a person
walking a dog on a leash \cite{Alt1995}. The person walks forward
on one curve, and the dog walks forward on the other curve. As the
person and dog move along their respective curves, a leash is maintained
to keep track of the separation between them. The Fréchet distance
is the length of the \emph{shortest} leash that makes it possible
for the person and dog to walk from beginning to end on their respective
curves without breaking the leash. See section \ref{sec:Preliminaries}
for a formal definition of the Fréchet distance.

%\subsection{Related Work}

Most previous work assumes an obstacle-free environment where the
leash connecting the person to the dog has its length defined by an
$L_{p}$ metric. In \cite{Alt1995} the Fréchet distance between polygonal
curves $A$ and $B$ is computed in arbitrary dimensions for obstacle-free
environments in $O(N^{2}\log N)$ time, where $N$ is the larger of
the complexities of $A$ and $B$. Rote \cite{Rote2005} computes
the Fréchet distance between piecewise smooth curves. Buchin et al.\ \cite{Buchin2006}
show how to compute the Fréchet distance between two simple polygons.
Fréchet distance has also been used successfully in the practical
realm of map matching \cite{Wenk2006}. All these works assume a leash
length that is defined by an $L_{p}$ metric.

This paper's contribution is to measure the leash length by its geodesic
distance inside a simple polygon $P$ (instead of by its $L_{p}$
distance). To our knowledge, there are only two other works that employ
such a leash. One is a workshop article \cite{Maheshwari2005} that
computes the Fréchet distance for polygonal curves $A$ and $B$ on
the surface of a convex polyhedron in $O(N^{3}k^{4}\log(kN))$ time.
The other paper \cite{Efrat2002} applies the Fréchet distance to
morphing by considering the polygonal curves $A$ and $B$ to be obstacles
that the leash must go around. Their method works in $O(N^{2}\log^{2}N)$
time but only applies when $A$ and $B$ both lie on the boundary
of a simple polygon. Our work can handle both this case and more general
cases. We consider a simple polygon $P$ to be the only obstacle and
the curves, which may intersect each other or self-intersect, both
lie inside $P$.

%\subsection{Overview}

A core insight of this paper is that the free space in a geodesic
cell (see section \ref{sec:Preliminaries}) is $x$-monotone, $y$-monotone,
and connected. We show how to quickly compute a cell boundary and
how to propagate reachability through a cell in constant time. This
is sufficient to solve the geodesic Fréchet decision problem. To solve
the geodesic Fréchet optimization problem, we replace the standard
parametric search approach by a novel and asymptotically faster (in
the expected case) randomized algorithm that is based on red-blue
intersection counting. We show that the geodesic Fréchet distance
between two polygonal curves inside a simple bounding polygon can
be computed in $O(k+N^{2}\log kN\log N)$ expected time and $O(k+N^{3}\log kN)$
worst-case time, where $N$ is the larger of the complexities of $A$
and $B$ and $k$ is the complexity of the simple polygon. The expected
runtime is almost a quadratic factor in $k$ faster than the straightforward
approach, similar to \cite{Efrat2002}, of partitioning each cell
into $O(k^{2})$ subcells. Briefly, these subcells are simple combinatorial
regions based on \emph{pairs} of hourglass intervals. It is notable
that the randomized algorithm also applies to the non-geodesic Fréchet
distance in arbitrary dimensions. We also present algorithms to compute
the geodesic Fréchet distance in a polygonal domain with obstacles
and the geodesic Hausdorff distance for sets of points or sets of
line segments inside a simple polygon.

\section{Preliminaries\label{sec:Preliminaries}}

Let $k$ be the complexity of a simple polygon $P$ that contains
polygonal curves $A$ and $B$ in its interior. In general, a \emph{geodesic}
is a path that avoids all obstacles and cannot be shortened by slight
perturbations \cite{Mitchell1987}. However, a geodesic inside a simple
polygon is simply a unique shortest path between two points. Let $\pi(a,b)$
denote the geodesic inside $P$ between points $a$ and $b$. The
\emph{geodesic distance} $d(a,b)$ is the length of a shortest path
between $a$ and $b$ that avoids all obstacles, where length is measured
by $L_{2}$ distance.

Let $\downarrow$, $\uparrow$, and $\downarrow\uparrow$ denote decreasing,
increasing, and decreasing then increasing functions, respectively.
For example, {}``$H$ is $\downarrow\uparrow$-bitonic'' means that
$H$ is a function that decreases monotonically then increases monotonically.
A \emph{bitonic} function has at most one change in monotonicity.

%%\subsection{Definitions\label{sub:Definitions}}
%\subsection{Fr\'echet Distance\label{sub:Definitions}}

\noindent The Fréchet distance for two curves $A,B:[0,1]\rightarrow\R^{l}$
is defined as\[
\delta_{F}(A,B)=\inf_{f,g:[0,1]\rightarrow[0,1]}\sup_{t\in[0,1]}\ d'(\ A(f(t)),B(g(t))\ )\]
 where $f$ and $g$ range over continuous non-decreasing reparametrizations
and $d'$ is a distance metric for points, usually the $L_{2}$ distance,
and in our setting the geodesic distance. For a given $\varepsilon>0$
the {\em free space} is defined as $FS_{\varepsilon}(A,B)=\{(s,t)\;|\; d'(A(s),B(t))\leq\varepsilon\}\subseteq[0,1]^{2}$.
A free space cell $C\subseteq[0,1]^{2}$ is the parameter space defined
by two line segments $\overline{ab}\in A$ and $\overline{cd}\in B$,
and the free space inside the cell is $FS_{\varepsilon}(\overline{ab},\overline{cd})=FS_{\varepsilon}(A,B)\cap C$.

%Carola to Atlas: Shouldn't use this notation because A:[0,m] instead
%of [0,1]
%More formally, suppose
%the polygonal curves are defined as $A:[0,m]\rightarrow V$ and $B:[0,n]\rightarrow V$,
%where $V$ is a Euclidean vector space in the plane and $m$ and $n$
%are, respectively, the number of line segments defining $A$ and $B$.
%Cell $C_{ij}=[i-1,i]\times[j-1,j]$ for $1\leq i\leq m$ and $1\leq j\leq n$
%(see \cite{Alt1995}).
The decision problem to check whether the Fréchet distance is at most
a given $\varepsilon>0$ is solved by Alt and Godau \cite{Alt1995}
using a \emph{free space diagram} which consists of all free space
cells for all pairs of line segments of $A$ and $B$. Their dynamic
programming algorithm checks for the existence of a monotone path
in the free space from $(0,0)$ to $(1,1)$ by propagating \emph{reachability
information} cell by cell through the free space.

\subsection{Funnels and Hourglasses\label{sub:Funnels-and-Hourglasses}}

Geodesics in a free space cell $C$ can be described by either the
funnel or hourglass structure of \cite{Guibas1986}. A funnel describes
all shortest paths between a point and a line segment, so it represents
a horizontal (or vertical) line segment in $C$. An hourglass describes
all shortest paths between two line segments and represents all distances
in $C$.

The \emph{funnel} $\mathcal{F}_{p,\overline{cd}}$ describes all shortest
paths between an apex point $p$ and a line segment $\overline{cd}$.
The boundary of $\mathcal{F}_{p,\overline{cd}}$ is the union of the
line segment $\overline{cd}$ and the shortest path chains $\pi(p,c)$
and $\pi(p,d)$. The \emph{hourglass} $\mathcal{H}_{\overline{ab},\overline{cd}}$
describes all shortest paths between two line segments $\overline{ab}$
and $\overline{cd}$. The boundary of $\mathcal{H}_{\overline{ab},\overline{cd}}$
is composed of the two line segments $\overline{ab}$, $\overline{cd}$
and at most four shortest path chains involving $a$, $b$, $c$,
and $d$. See Figure \ref{fig:hourglass-to-pseudo-funnel}. Funnel and hourglass
boundaries have $O(k)$ complexity because shortest paths inside a
simple polygon $P$ are acyclic, polygonal, and only have corners
at vertices of $P$ \cite{Guibas1987}.

%\subsection{Funnel Bitonicity\label{sub:The-Rudolph-Principle}}

Any horizontal or vertical line segment in a geodesic free space cell
is associated with a funnel's distance function $F_{p,\ \overline{cd}}:[c,d]\rightarrow\mathbb{R}$
with $F_{p,\ \overline{cd}}(q)=d(p,q)$. The below three results are
generalizations of Euclidean properties and are omitted.%
{} See \cite{Cook2007} for details.

\begin{lemma}\label{lemma-rudolph}

$F_{p,\ \overline{cd}}$ is $\downarrow\uparrow$-bitonic.

\end{lemma}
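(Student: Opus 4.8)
The plan is to peel off the common part of the funnel, reduce to the Euclidean convexity of distance-to-a-fixed-point, and glue the pieces using the concavity of the two funnel chains. First I would introduce the funnel apex: let $v$ be the last point common to the shortest paths $\pi(p,c)$ and $\pi(p,d)$, so that by the funnel structure of \cite{Guibas1986} every $q\in\overline{cd}$ satisfies $\pi(p,q)=\pi(p,v)\cup\pi(v,q)$ and hence $F_{p,\overline{cd}}(q)=d(p,v)+d(v,q)$. Since $d(p,v)$ is a constant, it suffices to prove that $q\mapsto d(v,q)$ is $\downarrow\uparrow$-bitonic, where now $v$ has two chains $\pi(v,c)=v,a_{1},\dots,a_{p},c$ and $\pi(v,d)=v,b_{1},\dots,b_{r},d$ that are concave and meet only at $v$.

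Next I would describe $d(v,q)$ explicitly as $q$ sweeps from $c$ to $d$. Because both chains are concave, the geodesic $\pi(v,q)$ is obtained by unwinding vertices off the $a$-chain and then winding vertices onto the $b$-chain, so the last vertex before $q$ runs monotonically through the sequence $a_{p},a_{p-1},\dots,a_{1},v,b_{1},\dots,b_{r}$, and each vertex $w$ in this sequence is responsible for a subinterval $I_{w}\subseteq[c,d]$ on which $d(v,q)=d(v,w)+|wq|$. Parametrizing $\overline{cd}$ linearly by $t$, on each $I_{w}$ the map $t\mapsto d(v,w)+|w-q(t)|$ is the Euclidean distance from a fixed point $w$ to a point moving linearly along a line, which is convex; hence its derivative, which equals the unit vector from $w$ to $q$ dotted with the (constant) direction of $\overline{cd}$, is non-decreasing on $I_{w}$.

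Then I would glue the pieces. At a breakpoint where the responsible vertex changes from $w$ to $w'$, the geodesic $\pi(v,q)$ passes straight through both, so $w$, $w'$, and $q$ are collinear in that order; therefore the unit vector from $w$ to $q$ coincides with the unit vector from $w'$ to $q$, and so $\frac{d}{dt}\,d(v,q)$ is continuous across every breakpoint, including the transition through (or past) the apex $v$. A function that is continuous, piecewise smooth, and has a non-decreasing derivative on each of finitely many abutting pieces with matching one-sided derivatives at the joints has a non-decreasing derivative throughout, hence is convex on $[c,d]$, hence is non-increasing then non-decreasing, i.e.\ $\downarrow\uparrow$-bitonic. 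Adding back the constant $d(p,v)$ preserves this, which proves the lemma.

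The only place genuine geometry enters — and the main obstacle — is the claim in the second step that the ``last vertex before $q$'' is a monotone function of $q$ along $\overline{cd}$ and that each vertex change is a collinearity event; this is exactly where one must invoke that the two funnel chains are concave (inherited from the fact that shortest paths in a simple polygon turn consistently, \cite{Guibas1987}). Once that combinatorial structure is pinned down, everything else is the elementary calculus of convex functions of one variable. An alternative presentation would unfold the funnel isometrically about $v$ so that $d(v,q)$ becomes literally the Euclidean distance from an unfolded copy of $v$ to $q$; I find the explicit piecewise-derivative bookkeeping cleaner, but either route works.
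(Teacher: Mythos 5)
Your argument is correct. Note that the paper itself gives no proof of this lemma: it states that Lemma \ref{lemma-rudolph} (like Corollary \ref{corollary-rudolph} and Lemma \ref{lemma-hourglass-bitonicity}) is a ``generalization of Euclidean properties'' and defers the details to the technical report \cite{Cook2007}, so there is nothing in this text to compare against line by line. Your route is the natural one and it is sound: the funnel apex $v$ splits off the constant $d(p,v)$, the concavity of the two chains (inherited from shortest paths in a simple polygon \cite{Guibas1986,Guibas1987}) gives the monotone sequence of ``last vertices'' $a_{p},\dots,a_{1},v,b_{1},\dots,b_{r}$ with each piece of the form $d(v,w)+|wq|$, and the collinearity of $w$, $w'$, $q$ at each breakpoint makes the one-sided derivatives match, so the piecewise-convex pieces glue into a convex function of the linear parameter on $\overline{cd}$. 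In fact you prove more than the lemma asks: convexity of $F_{p,\ \overline{cd}}$ along the segment, of which $\downarrow\uparrow$-bitonicity is an immediate consequence (and which also yields Corollary \ref{corollary-rudolph} directly). The one step worth stating explicitly rather than asserting is the monotone-unwinding claim itself, but you correctly identify it and attribute it to the funnel structure, and the isometric-unfolding alternative you mention would serve equally well.
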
%\begin{proof} Omitted. See our full paper \cite{Cook2007}
%for details.\end{proof}

\begin{corollary}\label{corollary-rudolph}

Any horizontal (or vertical) line segment in a free space cell has
at most one connected set of free space values.

\end{corollary}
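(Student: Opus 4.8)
The plan is to reduce the corollary directly to Lemma~\ref{lemma-rudolph} by identifying the free space values along the segment with the sublevel set $\{q : F_{p,\overline{cd}}(q) \le \varepsilon\}$ of a bitonic function, and then observing that such a sublevel set is necessarily a single interval.

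First I would fix a horizontal line segment $\ell$ in the free space cell $C$; the vertical case is symmetric, with the roles of the two parameters interchanged. Such a segment corresponds to holding the $A$-parameter fixed at some point $p \in \overline{ab}$ while the $B$-parameter ranges over $\overline{cd}$, so $\ell$ is exactly the horizontal segment associated with the funnel $\mathcal{F}_{p,\overline{cd}}$. By the definition of free space, a point $q \in \overline{cd}$ is a free space value along $\ell$ precisely when $d(p,q) \le \varepsilon$, i.e.\ when $F_{p,\overline{cd}}(q) \le \varepsilon$. Hence the set of free space values along $\ell$ is the sublevel set $S_\varepsilon = \{q \in [c,d] : F_{p,\overline{cd}}(q) \le \varepsilon\}$.

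Next I would invoke Lemma~\ref{lemma-rudolph}, which guarantees that $F_{p,\overline{cd}}$ is $\downarrow\uparrow$-bitonic: it decreases monotonically up to some minimizer $q^{*}$ and increases monotonically thereafter. For such a function the sublevel set is connected. Concretely, if $q_1 < q_2$ both lie in $S_\varepsilon$ and $q_1 \le q \le q_2$, then either $q \le q^{*}$, in which case $F_{p,\overline{cd}}(q) \le F_{p,\overline{cd}}(q_1) \le \varepsilon$ by monotonicity on the decreasing part, or $q \ge q^{*}$, in which case $F_{p,\overline{cd}}(q) \le F_{p,\overline{cd}}(q_2) \le \varepsilon$ by monotonicity on the increasing part; either way $q \in S_\varepsilon$. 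Thus $S_\varepsilon$ is an interval, which is the claim that there is at most one connected set of free space values.

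The argument is short, so there is no serious obstacle once Lemma~\ref{lemma-rudolph} is in hand. The only points that require care are the clean correspondence between a line segment in the cell and a single funnel distance function (so that the free space along the segment genuinely is a sublevel set of $F_{p,\overline{cd}}$), and the degenerate cases in which the minimizer $q^{*}$ is an endpoint of $[c,d]$ or in which $S_\varepsilon$ is empty. Both degeneracies are absorbed by the interval argument above: a purely decreasing or purely increasing restriction is still $\downarrow\uparrow$-bitonic, and the empty set is trivially connected.
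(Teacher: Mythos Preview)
Your argument is correct and is exactly the natural one: identify the free space along a horizontal slice with the sublevel set of the funnel distance function and use Lemma~\ref{lemma-rudolph} to conclude that this sublevel set is an interval. The paper itself omits the proof of this corollary (grouping it with the other ``generalizations of Euclidean properties'' whose proofs are deferred to the technical report), so there is nothing further to compare.
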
 %\begin{proof}
%A horizontal (or vertical) line segment in a geodesic free space cell
%has a distance function $F_{p,\ \overline{cd}}$. Free space consists
%of all values less than or equal to a given distance $\varepsilon$.
%Since Lemma \ref{lemma-rudolph} ensures that $F_{p,\ \overline{cd}}$
%is $\downarrow\uparrow$-bitonic, $F_{p,\ \overline{cd}}$ has at most
%one connected set of free space values.\end{proof}

%\subsection{Hourglass Bitonicity\label{sub:Hourglass-Bitonicity}}
%This section introduces a $\downarrow\uparrow$-bitonic distance function
%$H_{\overline{ab},\ \overline{cd}}$ for the hourglass $\mathcal{H}_{\overline{ab},\ \overline{cd}}$.
%$H_{\overline{ab},\ \overline{cd}}$ will be useful in section \ref{sec:Geodesic-Cell-Properties}
%for analyzing the structure of a geodesic free space cell.
Consider the hourglass $\mathcal{H}_{\overline{ab},\ \overline{cd}}$
in Figure \ref{fig:hourglass-to-pseudo-funnel}. Let the \emph{shortest}
distance from $a$ to any point on $\overline{cd}$ occur at $M_{a}\in\overline{cd}$.
Define $M_{b}$ similarly. As $p$ varies from $a$ to $b$, the \emph{minimum}
distance from $p$ to $\overline{cd}$ traces out a function $H_{\overline{ab},\ \overline{cd}}:[a,b]\rightarrow\mathbb{R}$
with $H_{\overline{ab},\ \overline{cd}}(p)=\min_{q\in[c,d]}d(p,q)$.%
\begin{figure}
\begin{centering}
\includegraphics[scale=0.33]{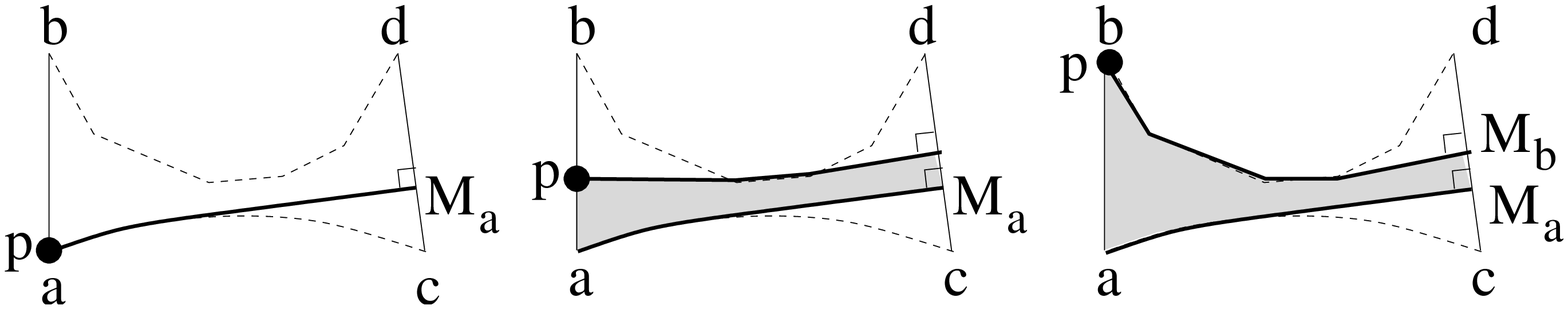} 
\par\end{centering}

\caption{Shortest paths in the hourglass $\mathcal{H}_{\overline{ab},\overline{cd}}$
define $H_{\overline{ab},\ \overline{cd}}$.  \label{fig:hourglass-to-pseudo-funnel}}

\end{figure}

\begin{lemma} \label{lemma-hourglass-bitonicity}

$H_{\overline{ab},\ \overline{cd}}$ is $\downarrow\uparrow$-bitonic.

\end{lemma}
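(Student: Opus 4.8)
The plan is to reduce Lemma~\ref{lemma-hourglass-bitonicity} to the bitonicity of funnel distance functions (Lemma~\ref{lemma-rudolph}) by decomposing the hourglass into funnels along the critical points $M_a$ and $M_b$. First I would fix notation: $H_{\overline{ab},\overline{cd}}(p)=\min_{q\in[c,d]}d(p,q)$, and for each $p\in[a,b]$ let $q^*(p)\in\overline{cd}$ denote a point achieving this minimum. The key geometric fact I would establish is that the ``closest-point'' map $p\mapsto q^*(p)$ is monotone along $\overline{cd}$ as $p$ sweeps from $a$ to $b$. This follows because shortest paths inside a simple polygon cannot cross: if $p_1$ precedes $p_2$ on $\overline{ab}$ and their geodesic shortest connections to $\overline{cd}$ landed in the ``wrong'' order, the two geodesics $\pi(p_1,q^*(p_1))$ and $\pi(p_2,q^*(p_2))$ would have to cross inside $P$, and a standard exchange argument (swapping tails at the crossing point) would contradict minimality of one of them. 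This monotonicity is exactly why the hourglass has its characteristic shape, and it partitions $[a,b]$ into at most three consecutive intervals: one where $q^*(p)=c$ (the endpoint of $\overline{cd}$), a middle interval where $q^*(p)$ ranges over the interior of $\overline{cd}$, and one where $q^*(p)=d$ — possibly with some of these empty.

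Next I would analyze $H$ on each of these three pieces. On the first interval, where the closest point is pinned at $c$, we have $H_{\overline{ab},\overline{cd}}(p)=d(p,c)=F_{c,\overline{ab}}(p)$, which by Lemma~\ref{lemma-rudolph} is $\downarrow\uparrow$-bitonic; symmetrically for the third interval with endpoint $d$. On the middle interval, the closest point lies in the relative interior of $\overline{cd}$, so the geodesic $\pi(p,q^*(p))$ meets $\overline{cd}$ perpendicularly (in the geodesic sense — the last segment of the path is orthogonal to $\overline{cd}$), which means $q^*(p)$ is the geodesic foot of the perpendicular from $p$. Here I would argue that $H$ is governed by the apex structure of the hourglass: the relevant shortest paths from $\overline{ab}$ to $\overline{cd}$ all pass through the two apices of the hourglass, and near each apex the distance function behaves like a shifted funnel distance, hence is bitonic; moreover on this middle stretch $H$ is in fact monotone (it decreases to the global minimum at $M_a$ or $M_b$, whichever bounds this interval, then increases), so it contributes at most one monotonicity change.

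The final step is to glue the three bitonic pieces into a single globally bitonic function. The crucial observation making this work is that the value of $H$ at the shared endpoints is a single well-defined number (continuity of $H$, which follows from continuity of geodesic distance), and the \emph{slopes} match up in a one-directional way: as $p$ increases through the first breakpoint, we move from the ``pinned at $c$'' regime into the ``interior'' regime, and because $c$ was the closest point just before and is no longer afterward, the derivative of $H$ can only have increased (or the function was already increasing). A clean way to see the whole picture at once is to note that $H_{\overline{ab},\overline{cd}}(p)$ equals the minimum over $q\in\{c,d\}\cup\{\text{geodesic foot}\}$ of three bitonic functions, arranged so their domains of activity are consecutive; since $c$ and $d$ are the global-minimum witnesses only at the extremes $M_a,M_b$, the combined function decreases until it hits its unique global minimum (at $M_a$ or at $M_b$) and increases thereafter. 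I expect the main obstacle to be making the middle-interval analysis rigorous — specifically, proving that $H$ restricted to the interval where $q^*(p)$ is interior to $\overline{cd}$ is monotone rather than merely bitonic. This requires a careful argument about how geodesic distance to a line segment varies as the source moves along another segment, using the non-crossing property of geodesics together with the convexity of the funnel chains; the Euclidean intuition (distance to a line from a moving point along a line is itself a nice convex-ish function) has to be transported through the geodesic wrapping, and that is where the real work lies.
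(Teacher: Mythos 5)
You are right that the paper itself omits this proof (it is deferred to the technical report), so your argument must stand on its own; as written it has a genuine gap at its center. The sound parts are the non-crossing/exchange argument giving monotonicity of the closest-point map $p\mapsto q^{*}(p)$ and the reduction of the two ``pinned'' intervals to Lemma \ref{lemma-rudolph}. But the pivotal claim about the middle interval -- that $H_{\overline{ab},\ \overline{cd}}$ is \emph{monotone} on the interval where $q^{*}(p)$ lies in the relative interior of $\overline{cd}$ -- is false, not merely hard to make rigorous. Take an L-shaped (closed-hourglass) configuration in which every shortest path from $\overline{ab}$ to $\overline{cd}$ wraps around a single reflex vertex $v$, with the point of $\overline{cd}$ nearest to $v$ interior to $\overline{cd}$ and the point of $\overline{ab}$ nearest to $v$ interior to $\overline{ab}$. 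Then $q^{*}(p)$ is the same interior point for every $p$, so all of $[a,b]$ is your ``middle interval,'' yet $H_{\overline{ab},\ \overline{cd}}(p)=d(p,v)+d(v,q^{*})$ strictly decreases and then increases; the lemma holds here, but your intermediate claim does not. The same passage also conflates $M_{a},M_{b}$, which are points of $\overline{cd}$, with parameter values in $[a,b]$: they mark where the endpoint values $H(a)$ and $H(b)$ are attained, not where $H$ attains its minimum over $p$.

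Because of this, the gluing step cannot be completed as designed, and it is independently insufficient: a concatenation (or a pointwise minimum with ``consecutive domains of activity'') of bitonic pieces need not be bitonic, and the assertion that the derivative ``can only have increased'' across a breakpoint is exactly what has to be proved, not assumed. What the lemma needs is a global argument excluding a second descent, for example establishing quasiconvexity -- $H_{\overline{ab},\ \overline{cd}}(p_{2})\leq\max\bigl(H_{\overline{ab},\ \overline{cd}}(p_{1}),H_{\overline{ab},\ \overline{cd}}(p_{3})\bigr)$ for all $p_{1}<p_{2}<p_{3}$, equivalently that every sublevel set of $H_{\overline{ab},\ \overline{cd}}$ is an interval -- and that comparison is precisely the content your three-piece decomposition does not supply. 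One clean (if less elementary) repair: a simple polygon with its geodesic metric is a CAT(0) space, in which $t\mapsto d(\gamma(t),\sigma(t))$ is convex for affinely parametrized geodesics $\gamma\subseteq\overline{ab}$ and $\sigma\subseteq\overline{cd}$; taking $\sigma$ between $q^{*}(p_{1})$ and $q^{*}(p_{3})$ immediately gives the displayed inequality, and hence bitonicity, without any case analysis. Either pursue such a global argument or discard the middle-interval monotonicity claim; as it stands the proposal does not prove the lemma.
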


\section{Geodesic Cell Properties\label{sec:Geodesic-Cell-Properties}}

Consider a geodesic free space cell \emph{C} for polygonal curves
$A$ and \emph{$B$} inside a simple polygon. Let $\overline{ab}\in A$
and $\overline{cd}\in B$ be the two line segments defining $C$.

\begin{lemma} \label{one-free-region-per-cell}

For any $\varepsilon$, cell $C$ contains at most one free space
region $R$, and $R$ is $x$-monotone, $y$-monotone, and connected.

\end{lemma}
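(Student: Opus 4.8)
The plan is to establish the three claimed properties — $x$-monotonicity, $y$-monotonicity, and connectedness — and then argue that these together force there to be at most one free space region. The key tool is Corollary~\ref{corollary-rudolph}: every horizontal and every vertical line segment in $C$ meets the free space in at most one connected interval. I would first recall that the free space $FS_\varepsilon(\overline{ab},\overline{cd})$ in a cell is exactly the sublevel set $\{(s,t)\in C \mid d(A(s),B(t))\le\varepsilon\}$. Fixing the horizontal coordinate $s$ and letting $t$ vary corresponds to the distance function of a funnel $\mathcal{F}_{A(s),\overline{cd}}$, which by Lemma~\ref{lemma-rudolph} is $\downarrow\uparrow$-bitonic; hence its sublevel set is a single interval (possibly empty). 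This is precisely $x$-monotonicity of the free space (the intersection with each vertical line is connected). The symmetric argument with the roles of $A$ and $B$ swapped — fixing $t$ and varying $s$ — gives $y$-monotonicity.

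Next I would handle connectedness, which is the step I expect to be the main obstacle, since $x$- and $y$-monotonicity of a planar set do not by themselves imply connectedness (an ``L-shaped'' pair of disjoint blobs, one above the other and one to the side, can be both $x$- and $y$-monotone). To rule this out I would use the bitonicity of the hourglass minimum function $H_{\overline{ab},\overline{cd}}$ from Lemma~\ref{lemma-hourglass-bitonicity}. Suppose $C$ contained two disjoint free space regions $R_1$ and $R_2$. Their projections onto the $s$-axis are intervals (by $x$-monotonicity each region is $x$-monotone, and a nonempty $x$-monotone region projects to an interval); if these $s$-projections were disjoint, then there would be some value $s^\ast$ strictly between them at which the funnel $\mathcal{F}_{A(s^\ast),\overline{cd}}$ has minimum distance exceeding $\varepsilon$, yet at nearby $s$ values on either side the minimum is at most $\varepsilon$ — this contradicts the $\downarrow\uparrow$-bitonicity of $H_{\overline{ab},\overline{cd}}$, whose value at $s$ is exactly that funnel minimum. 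Hence the $s$-projections of $R_1$ and $R_2$ overlap in some subinterval $I$. Symmetrically, using bitonicity of the hourglass function in the other direction, the $t$-projections overlap in some subinterval $J$.

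Finally I would derive a contradiction from the overlap. Pick $s_0\in I$; the vertical line $\{s=s_0\}$ meets both $R_1$ and $R_2$ in nonempty intervals (since $s_0$ lies in both $s$-projections and each region, being $x$-monotone and crossing that line, actually intersects it), but by Corollary~\ref{corollary-rudolph} it meets the free space in a single interval, so the two intervals must be the same interval and $R_1$ and $R_2$ share a point — contradicting disjointness. Therefore there is at most one free space region in $C$, and by the first two paragraphs it is $x$-monotone, $y$-monotone, and connected. The delicate point to get right is the claim that a region meeting a given $s$-value actually intersects the corresponding vertical line (rather than just having that value in the closure of its projection); I would handle this by working with the projections of the regions as relatively open/closed subintervals and noting that an $x$-monotone connected region whose $s$-projection contains $s_0$ does intersect $\{s=s_0\}$ — essentially because $x$-monotone here means the vertical slice is connected, and the slice is nonempty exactly on the projection. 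With care about endpoints (using closedness of the sublevel set $FS_\varepsilon$, since $d(\cdot,\cdot)$ is continuous), this finishes the argument.
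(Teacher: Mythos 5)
Your proof is correct, but it is organized quite differently from the paper's. The paper proves connectedness \emph{constructively}: given any two free space points, it drops each one vertically to the minimum point of its column (distance only decreases, by Lemma~\ref{lemma-rudolph}) and then joins the two minimum points along the valley of per-column minima, where the $\downarrow\uparrow$-bitonicity of $H_{\overline{ab},\overline{cd}}$ (Lemma~\ref{lemma-hourglass-bitonicity}) guarantees the distance never exceeds the value at the two endpoints, hence stays $\le\varepsilon$; this single path argument gives connectedness and ``at most one region'' simultaneously, with monotonicity quoted from Corollary~\ref{corollary-rudolph} exactly as you do. You instead argue by contradiction on two disjoint components: either their $s$-projections are separated, in which case some intermediate column has minimum distance $>\varepsilon$ while columns on both sides have minimum $\le\varepsilon$, violating the bitonicity of $H_{\overline{ab},\overline{cd}}$, or the projections share an abscissa, in which case Corollary~\ref{corollary-rudolph} forces the two components to meet on that vertical slice. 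Both arguments rest on the same two bitonicity facts, so the difference is one of structure rather than of tools. Your version has the merit of explicitly flagging (and correctly circumventing) the trap that $x$- and $y$-monotonicity alone do not imply connectedness, and it avoids having to exhibit an actual continuous path through the locus of column minima, which the paper treats somewhat informally; its minor inefficiencies are the superfluous symmetric step about the $t$-projections (the $s$-projection overlap already yields the contradiction) and the slightly loose phrase ``at nearby $s$ values on either side,'' which should refer to witnesses chosen inside the two projections straddling the gap --- both easily repaired, as your closing remarks about closedness of the sublevel set indicate you are aware.
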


\begin{proof}
The monotonicity of $R$ follows from Corollary \ref{corollary-rudolph}.
For connectedness, choose any two free space points $(p_{1},q_{1}),(p_{2},q_{2})$,
and construct a path connecting them in the free space as follows:
move vertically from $(p_{1},q_{1})$ to the minimum point on its
vertical. Do the same for $(p_{2},q_{2})$. By Lemma $\ref{lemma-rudolph}$,
this movement causes the distance to decrease monotonically. By Lemma
\ref{lemma-hourglass-bitonicity}, any two minimum points are connected
by a $\downarrow\uparrow$-bitonic distance function $H_{\overline{ab},\ \overline{cd}}$
(cf.\ section \ref{sub:Funnels-and-Hourglasses}), but as the starting
points are in the free space -- and therefore have distance at most
$\varepsilon$ -- all points on this constructed path lie in the free
space. 
\end{proof}
Given $C$'s boundaries, it is possible to propagate reachability
information (see section \ref{sec:Preliminaries}) through $C$ in
constant time. This follows from the monotonicity and connectedness
of the free space in $C$ and is useful for solving the geodesic decision
problem.

\section{Red-Blue Intersections\label{sub:Red-Blue-Intersections}}

This section shows how to efficiently count and report a certain type
of red-blue intersections in the plane. % inside a search domain $x\in[\alpha,\beta]$.
This problem is interesting both from theoretical and applied stances
and will prove useful in section \ref{sub:ParametricSearch} for the
Fréchet optimization problem.

Let $R$ be a set of $m$ {}``red'' curves in the plane such that
every red curve is continuous, $x$-monotone, and monotone \emph{decreasing}.
Let $B$ be a set of $n$ {}``blue'' curves in the plane where each
blue curve is continuous, $x$-monotone, and monotone \emph{increasing}.
Assume that the curves are defined in the slab $[\alpha,\beta]\times\mathbb{R}$,
and let $I(k)$ be the time to find the at most one intersection of
any red and blue curve.\emph{}%
\footnote{There is at most one intersection due to the monotonicities of the
red and blue curves.%
}%

\begin{theorem}\label{Thm: intersectionCounting}

The number of red-blue intersections between $R$ and $B$ in the
slab $[\alpha,\beta]\times\mathbb{R}$ can be \emph{counted} in $O(N\log N)$
total time, where $N=\max(m,n)$. These intersections can be \emph{reported}
in $O(N\log N+K\cdot I(k))$ total time, where $K$ is the total number
of intersections reported. After $O(N\log N)$ preprocessing time,
a \emph{random} red-blue intersection in $[\alpha,\beta]\times\mathbb{R}$
can be returned in $O(\log N+I(k))$ time, and the red curve involved
in the most red-blue intersections can be returned in $O(1)$ time.
All operations require $O(N)$ space.%
\footnote{Palazzi and Snoeyink \cite{Palazzi1994} also count and report red-blue
intersections using a slab-based approach. However, their work is
for line segments instead of curves, and they require that all red
segments are disjoint and all blue segments are disjoint. We have
no such disjointness requirement.%
}

\end{theorem}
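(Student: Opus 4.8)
The plan is to exploit the fact that red curves are $x$-monotone decreasing and blue curves are $x$-monotone increasing, so that each red-blue pair meets at most once, and to reduce the counting problem to an inversion-counting problem on the vertical orderings of the curves at the two ends of the slab. First I would observe that along the left boundary $x=\alpha$ each curve attains a value, giving a permutation (more precisely an interleaving) of the $m+n$ curves; along the right boundary $x=\beta$ it attains another. A red curve $r$ and a blue curve $b$ intersect inside the open slab if and only if their vertical order swaps between $x=\alpha$ and $x=\beta$ — i.e. $r$ is above $b$ at one end and below $b$ at the other. (Degenerate ties at the boundary can be broken consistently, e.g. by perturbation, using the strict monotonicities.) Hence the total number of red-blue intersections equals the number of red-blue "inversions" between the two linear orders, which is a classical quantity computable by a merge-sort / balanced-BST sweep in $O(N\log N)$ time and $O(N)$ space, where $N=\max(m,n)$; note that red-red and blue-blue inversions must simply be ignored, which is easy since we only tally a cross when the two items have different colors.

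Next, for reporting, I would refine the same sweep so that it does not merely count but enumerates each inverted red-blue pair; for every such pair we invoke the single-intersection primitive, costing $I(k)$ per pair, for a total of $O(N\log N + K\cdot I(k))$, where $K$ is the number of pairs reported. For the query operations I would build, during the $O(N\log N)$ preprocessing, an auxiliary structure that for each red curve $r$ stores $c(r)$, the number of blue curves it crosses (obtainable as a byproduct of the inversion sweep — when processing the order change we can attribute each inversion to its red endpoint). The red curve with the maximum $c(r)$ is then found by scanning these counts once and can be returned in $O(1)$ afterwards. For a uniformly random red-blue intersection, I would store the prefix sums of the $c(r)$ values so that $\sum_r c(r)$ is the total count; to sample, pick a uniform integer in $[1,\sum_r c(r)]$, binary-search the prefix-sum array in $O(\log N)$ to identify the red curve $r$, then among the $c(r)$ blue curves crossing $r$ pick one uniformly — this requires that each red curve keep a (sorted or indexable) list of the blue curves it crosses, which can be produced by the reporting sweep restricted to that red curve, but to keep space at $O(N)$ I would instead store for each red curve the sorted list of $\beta$-order positions of the blue curves below it at $x=\alpha$ but above it at $x=\beta$; picking a random element of such a list and then calling the intersection primitive costs $O(\log N + I(k))$.

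The main obstacle I expect is keeping the query data structures within the stated $O(N)$ space while still supporting $O(\log N+I(k))$ random sampling, since naively the lists of crossing blue curves can have total size $\Theta(N^2)$ (indeed $K$ can be quadratic). The resolution is that we never materialize these lists: the random sample must be drawn implicitly. Concretely, I would run the inversion sweep once to compute, for each red curve $r$, only the integer $c(r)$ and enough positional information (e.g. $r$'s rank at $x=\alpha$ and at $x=\beta$) to later regenerate, on demand and in $O(\log N)$ time via a persistent or wavelet-tree-style structure on the two orderings, a uniformly random blue curve among those crossing $r$. A wavelet tree (or a persistent balanced BST) over the sequence of colors/positions supports exactly this kind of "count and sample the $j$-th element in a range" query in $O(\log N)$ time and $O(N)$ space, which is the clean way to meet all the bounds simultaneously; verifying that the sampling is genuinely uniform over all $\sum_r c(r)=K$ intersections — rather than uniform over red curves first — is the delicate point to get right, and it follows because we first pick $r$ with probability proportional to $c(r)$ and then pick uniformly within $r$'s crossing set.
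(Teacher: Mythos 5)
Your proposal is correct and follows essentially the same route as the paper: comparing the vertical orders of the curves at $x=\alpha$ and $x=\beta$, charging each red--blue order swap to its red curve to get per-curve counts (hence the total count, the maximally involved red curve, and prefix-sum sampling of a uniform intersection), and retrieving the $j$-th crossing of a chosen red curve implicitly in $O(\log N)$ time. Your wavelet-tree/persistent-BST structure for the implicit ``select the $j$-th inverted partner'' query is exactly the role played by the paper's persistent version of the reporting structure (Sarnak--Tarjan), so the space and query bounds match.
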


\begin{proofsketch}%
Figure \ref{fig:intersectionCounting} illustrates the key idea. Suppose
a red curve $r_{3}(x)$ lies \emph{above} a blue curve $b_{2}(x)$
at $x=\alpha$. If it is also true that $r_{3}(x)$ lies \emph{below}
$b_{2}(x)$ at $x=\beta$, then these monotone curves must intersect
in $[\alpha,\beta]\times\mathbb{R}$. Two sorted lists $L_{\alpha}$,
$L_{\beta}$ of curve values store how many blue curves lie below
each red curve at $x=\alpha$ and $x=\beta$. Subtracting the values
in $L_{\alpha}$ and $L_{\beta}$ yields the number of actual intersections
for each red curve in $[\alpha,\beta]\times\mathbb{R}$ (and also
reveals the red curve that is involved in the most intersections).
Intersection \emph{counting} simply sums up these values. Intersection
\emph{reporting} builds a balanced tree from $L_{\alpha}$ and $L_{\beta}$.%
\begin{figure}
\begin{centering}
\includegraphics[scale=0.43]{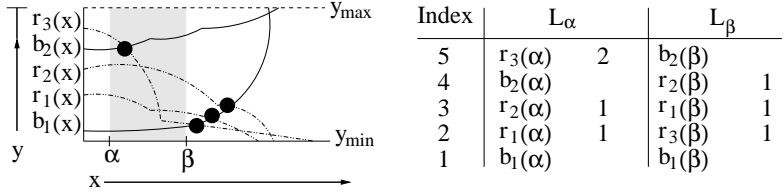} 
\par\end{centering}

\caption{$r_{3}(x)$ lies above \emph{two} blue curves at $x=\alpha$ but only
lies above \emph{one} blue curve at $x=\beta$. Subtraction reveals
that $r_{3}(x)$ has one intersection in the slab $[\alpha,\beta]\times\mathbb{R}$.\label{fig:intersectionCounting}}

\end{figure}

To find a \emph{random }red-blue intersection in $[\alpha,\beta]\times\mathbb{R}$,\emph{
}precompute the number $\kappa$ of red-blue intersections in $[\alpha,\beta]\times\mathbb{R}$.
Pick a random integer between 1 and $\kappa$ and use the number of
intersections stored for each red curve to locate the particular red
curve $r_{i}(x)$ that is involved in the randomly selected intersection.
By searching a persistent version of the reporting structure \cite{Sarnak1986},
$r_{i}(x)$'s $j$th red-blue intersection can be returned in $O(\log N+I(k))$
query time after $O(N\log N)$ preprocessing time. %
{}\end{proofsketch}%

\section{Geodesic Fréchet Algorithm\label{sec:Algorithms}}

\vskip-0.3cm
\subsection{Computing One Cell's Boundaries in $O(\log k)$ Time\label{sub:SecondAlgorithm}}

A boundary of a free space cell is a horizontal (or vertical) line
segment. This boundary can be associated with a funnel $\mathcal{F}_{p,\overline{cd}}$
that has a $\downarrow\uparrow$-bitonic distance function $F_{p,\
\overline{cd}}$ (cf.\ Lemma $\ref{lemma-rudolph}$). Given $\varepsilon\geq0$,
computing the free space on a cell boundary requires finding the (at
most two) values $t_{1}$, $t_{2}$ such that $F_{p,\ \overline{cd}}(t_{1})=F_{p,\
\overline{cd}}(t_{2})=\varepsilon$ (see Figure \ref{fig:cell-boundary}).%
\begin{figure}[H]
\begin{centering}
\includegraphics[scale=0.43]{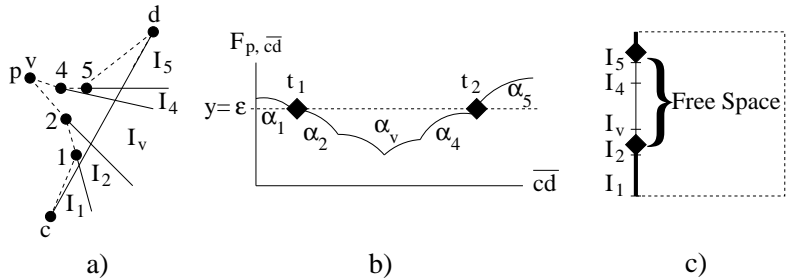} 
\par\end{centering}

\caption{a \& b) A funnel $\mathcal{F}_{p,\ \overline{cd}}$ is associated
with a cell boundary and has a bitonic distance function $F_{p,\ \overline{cd}}$.
c) The (at most two) values $t_{1}$, $t_{2}$ such that $F_{p,\ \overline{cd}}(t_{1})=F_{p,\
\overline{cd}}(t_{2})=\varepsilon$ define the free space on a cell boundary. \label{fig:cell-boundary}}

\end{figure}

\begin{lemma}\label{lemma:intersectionsWithEps}

Both the minimum value of $F_{p,\ \overline{cd}}$ and the (at most
two) values $t_{1}$, $t_{2}$ such that $F_{p,\ \overline{cd}}(t_{1})=F_{p,\
\overline{cd}}(t_{2})=\varepsilon$ can be found for any $\varepsilon\geq0$ in $O(\log k)$ time (after
preprocessing).

\end{lemma}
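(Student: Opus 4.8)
The plan is to exploit the structure of the funnel $\mathcal{F}_{p,\overline{cd}}$ together with the $\downarrow\uparrow$-bitonicity of $F_{p,\overline{cd}}$ from Lemma~\ref{lemma-rudolph}. First I would recall that the funnel boundary consists of the two shortest-path chains $\pi(p,c)$ and $\pi(p,d)$, which share a common prefix $\pi(p,x)$ from $p$ to some \emph{apex} vertex $x$, after which they split into two concave (outward-bulging) chains reaching $c$ and $d$ respectively. Preprocessing the simple polygon for shortest-path queries (e.g.\ building the shortest-path data structure of Guibas--Hershberger in $O(k)$ time, which is the $O(k)$ preprocessing the lemma presupposes) lets us obtain this funnel, or at least query into it, quickly. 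The key point is that for a point $q$ moving along $\overline{cd}$ from $c$ to $d$, the geodesic $\pi(p,q)$ has the form $\pi(p,x_q)$ followed by the straight segment from the last vertex $x_q$ (a vertex of one of the two concave chains) to $q$, and $F_{p,\overline{cd}}(q)=d(p,x_q)+|x_q q|$.

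The main technical step is locating the minimum of $F_{p,\overline{cd}}$. I would argue that the minimizing point $M\in\overline{cd}$ is the point where the geodesic $\pi(p,M)$ meets $\overline{cd}$ perpendicularly — or, if no such perpendicular foot exists on the relevant sub-chain, it is one of the endpoints $c$ or $d$. Because the chain of ``anchor'' vertices $x_q$ changes monotonically as $q$ sweeps from $c$ to $d$, and because within each fixed anchor the distance $d(p,x_q)+|x_q q|$ is a convex (in fact $\downarrow\uparrow$) function of $q$, one can binary search over the $O(k)$ vertices of the two concave chains: at a candidate vertex $v$, test the direction in which $F$ decreases by comparing the position of the perpendicular foot of $\pi(p,\cdot)$'s last edge against the current interval. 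This gives the minimum and its location in $O(\log k)$ time. For the two $\varepsilon$-crossings $t_1,t_2$: once the minimum location $M$ is known, $F_{p,\overline{cd}}$ is monotone decreasing on $[c,M]$ and monotone increasing on $[M,d]$, so on each of these two sub-intervals I would binary search for the unique point where $F=\varepsilon$ — again binary searching over the chain vertices to find the correct anchor interval, then solving a single one-variable equation $d(p,x)+|xq|=\varepsilon$ (intersection of a circle of radius $\varepsilon-d(p,x)$ centered at $x$ with the line through $\overline{cd}$) in $O(1)$ time once the anchor is pinned down.

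I expect the main obstacle to be making the binary search over the funnel rigorous: one must verify that the ``anchor vertex'' $x_q$ really is a monotone function of $q$ along $\overline{cd}$ (which follows from the non-crossing/ordering property of geodesics in a simple polygon and the concavity of the two funnel chains), and that the correct branch of the funnel — the chain toward $c$ versus the chain toward $d$ — can be identified at each probe without walking the whole chain. A clean way to handle this is to observe that $\overline{cd}$ together with the two concave chains bounds the funnel region, and the anchor of a point $q$ is determined by which ``pocket'' between consecutive chain vertices the point $q$ is visible from; a standard ray-shooting or fractional-cascading style query resolves this. Handling degeneracies — the apex $x$ itself being the anchor for a whole sub-segment of $\overline{cd}$, or $M$ falling at $c$ or $d$, or $\varepsilon$ being below the minimum (no solution) or above both endpoint values (solutions at the endpoints) — requires care but is routine once the monotone structure is in hand. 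With these pieces, both the minimum and the $\varepsilon$-level crossings are found in $O(\log k)$ time, as claimed.
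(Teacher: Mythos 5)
Your approach matches the paper's: after the $O(k)$ Guibas--Hershberger shortest-path preprocessing, the paper likewise performs a binary search over the $O(k)$ arcs of the bitonic function $F_{p,\ \overline{cd}}$ (one arc per anchor vertex of the funnel chains) to locate the minimum and the two $\varepsilon$-crossings, deferring the details to the full version. Your elaboration of the monotone anchor structure and the per-arc constant-time equation solving is consistent with that argument.
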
\begin{proofsketch}After $O(k)$ shortest path preprocessing
\cite{Guibas1989,Hershberger1991}, a binary search is performed on
the $O(k)$ arcs of $F_{p,\ \overline{cd}}$ in $O(\log k)$ time.
See our full paper \cite{Cook2007} for details.\end{proofsketch}%
{}\begin{corollary}\label{corollary:cellBoundaries}

The free space on all four boundaries of a free space cell can be
found in $O(\log k)$ time by computing $t_{1}$ and $t_{2}$ for
each boundary.%

\end{corollary}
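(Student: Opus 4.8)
The plan is to reduce the corollary directly to Lemma~\ref{lemma:intersectionsWithEps} by recognizing that each of the four boundaries of the free space cell $C$ is precisely the parameter interval of a funnel. Fixing the $B$-coordinate of $C$ at its two extreme values yields the two horizontal boundaries, which correspond to the funnels $\mathcal{F}_{c,\overline{ab}}$ and $\mathcal{F}_{d,\overline{ab}}$ (apex at an endpoint of $\overline{cd}$, base segment $\overline{ab}$); symmetrically, fixing the $A$-coordinate at its two extremes yields the two vertical boundaries $\mathcal{F}_{a,\overline{cd}}$ and $\mathcal{F}_{b,\overline{cd}}$. By Lemma~\ref{lemma-rudolph}, each of these four funnels has a $\downarrow\uparrow$-bitonic distance function $F_{p,\ \overline{\cdot}}$.

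Next I would invoke Lemma~\ref{lemma:intersectionsWithEps} once for each of the four funnels with the given $\varepsilon\geq 0$. Each invocation returns, in $O(\log k)$ time after the single global $O(k)$ shortest-path preprocessing of $P$, both the minimum value of that funnel's distance function and the at most two abscissae $t_1,t_2$ where the distance equals $\varepsilon$. By Corollary~\ref{corollary-rudolph} the free space on a boundary is one (possibly empty) connected interval, so the two returned values completely determine it: when the minimum exceeds $\varepsilon$ the interval is empty; otherwise its endpoints are $t_1$ and $t_2$, clipped to the endpoints of the underlying segment on whichever monotone branch(es) never attain the level $\varepsilon$. The minimum value tells us which case holds, so nothing beyond the four calls is required, and the running time is $4\cdot O(\log k)=O(\log k)$. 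The $O(k)$ preprocessing is performed once for $P$ and shared across all cells, so it does not recur here.

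The only point demanding care — not really an obstacle — is that the funnels must never be materialized explicitly: an explicit funnel boundary has $\Theta(k)$ complexity, which would destroy the $O(\log k)$ bound. Hence the binary search underlying Lemma~\ref{lemma:intersectionsWithEps} has to traverse the $O(k)$ arcs of the bitonic distance function implicitly through the shortest-path data structure, first isolating the valley of the bitonic curve and then descending into the relevant monotone branch(es) to locate the level set $\{\,F_{p,\ \overline{\cdot}}=\varepsilon\,\}$. Once that is granted by Lemma~\ref{lemma:intersectionsWithEps}, the corollary is immediate.
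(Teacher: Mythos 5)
Your proposal is correct and matches the paper's (implicit) argument: the corollary follows immediately by applying Lemma~\ref{lemma:intersectionsWithEps} to the four boundary funnels $\mathcal{F}_{a,\overline{cd}}$, $\mathcal{F}_{b,\overline{cd}}$, $\mathcal{F}_{c,\overline{ab}}$, $\mathcal{F}_{d,\overline{ab}}$, giving four $O(\log k)$ queries after the shared $O(k)$ preprocessing. Your added remarks on identifying the boundaries with funnels, handling the empty-interval case via the minimum value, and searching the distance function implicitly are consistent with how the paper intends the lemma to be used.
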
%

\vskip-0.3cm
\subsection{Geodesic Fréchet Decision Problem\label{sub:decisionProblem}}

\begin{theorem}\label{thm: DecisionProblem}

After preprocessing a simple polygon $P$ for shortest path queries
in $O(k)$ time \cite{Guibas1989},%
{} the geodesic Fréchet decision problem for polygonal curves $A$ and
$B$ inside $P$ can be solved for any $\varepsilon\geq0$ in $O(N^{2}\log k)$
time and $O(k+N)$ space.

\end{theorem}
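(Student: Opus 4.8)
The plan is to run the dynamic program of Alt and Godau \cite{Alt1995}, but on a geodesic free space diagram and using the cell-level tools already in hand. First I would invoke the $O(k)$-time shortest-path preprocessing of the simple polygon $P$ \cite{Guibas1989,Hershberger1991}; this single structure is shared by every cell, so it contributes only $O(k)$ to the time and $O(k)$ to the space. The geodesic free space diagram of $A$ and $B$ then consists of $O(N^{2})$ cells, one geodesic cell $C$ for each pair of edges $\overline{ab}\in A$, $\overline{cd}\in B$, and the decision question becomes: is there a monotone path in the (geodesic) free space from $(0,0)$ to $(1,1)$?

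Next, for every cell $C$ I would compute the free space along its four boundaries via Corollary \ref{corollary:cellBoundaries}, which costs $O(\log k)$ per cell and hence $O(N^{2}\log k)$ over all cells (adjacent cells share a boundary, so one could compute each boundary once, but this does not improve the bound). Once the boundaries of $C$ are known, Lemma \ref{one-free-region-per-cell} tells us that the free space inside $C$ is a single $x$- and $y$-monotone connected region, and — as observed in the remark immediately following that lemma — this already suffices to propagate reachability information from the bottom and left boundaries of $C$ to its top and right boundaries in $O(1)$ time, exactly as in the Euclidean setting. Sweeping the cells in row-major order and carrying this propagation starting from $(0,0)$, the algorithm decides in $O(N^{2})$ further time whether $(1,1)$ is reachable; by the Alt–Godau characterization this holds if and only if $\delta_{F}(A,B)\le\varepsilon$.

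For space, I would note that the sweep never stores the whole diagram: it keeps only the reachability intervals on the $O(N)$ horizontal boundaries forming the current frontier plus $O(1)$ intervals on the active vertical boundary, i.e.\ $O(N)$ words, in addition to the $O(k)$-word shortest-path structure, for $O(k+N)$ total. Summing, the running time is $O(k)+O(N^{2}\log k)+O(N^{2})=O(N^{2}\log k)$. I do not expect a real obstacle here, since the substantive work — that a geodesic cell carries only one monotone connected free region (Lemma \ref{one-free-region-per-cell}) and that its boundaries are computable in $O(\log k)$ time (Corollary \ref{corollary:cellBoundaries}) — is already done; the proof is essentially an accounting argument layered on top of Alt–Godau. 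The one point needing care is to confirm that constant-time reachability propagation truly goes through when the cell's free space is merely monotone-and-connected rather than convex, and for this I would simply appeal to the remark following Lemma \ref{one-free-region-per-cell}.
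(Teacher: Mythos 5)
Your proposal is correct and follows essentially the same route as the paper's proof: apply the Alt--Godau dynamic program, compute all cell boundaries in $O(N^{2}\log k)$ time via Corollary \ref{corollary:cellBoundaries}, propagate reachability in constant time per cell using the monotonicity and connectedness from Lemma \ref{one-free-region-per-cell}, and keep only two rows (the frontier) of the diagram plus the $O(k)$ shortest-path structure for $O(k+N)$ space. No gaps; your accounting matches the paper's.
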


\begin{proof}
Following the standard dynamic programming approach of \cite{Alt1995},
compute all cell boundaries in $O(N^{2}\log k)$ time (cf.\ Corollary
\ref{corollary:cellBoundaries}), and propagate reachability information
through all cells in $O(N^{2})$ time. $O(k)$ space is needed for
the preprocessing structures of \cite{Guibas1989}, and only $O(N)$
space is needed for dynamic programming if two rows of the free space
diagram are stored at a time. 
\end{proof}

\vskip-0.3cm
\subsection{Geodesic Fréchet Optimization Problem\label{sub:ParametricSearch}}

Let $\varepsilon^{*}$ be the minimum value of $\varepsilon$ such
that the Fréchet decision problem returns true. That is, $\varepsilon^{*}$
equals the Fréchet distance $\delta_{F}(A,B)$. Parametric search
is a technique commonly used to find $\varepsilon^{*}$ (see \cite{Agarwal1994,Alt1995,Cole1987,Oostrum2002}).%
\footnote{An easier to implement alternative to parametric search is to run
the decision problem once for every bit of accuracy that is desired.
This approach runs in $O(BN^{2}\log k)$ time and $O(k+N)$ space,
where $B$ is the desired number of bits of accuracy \cite{Oostrum2002}.%
} The typical approach to find $\varepsilon^{*}$ is to sort all the
cell boundary functions based on the unknown parameter $\varepsilon^{*}$.
The comparisons performed during the sort guarantee that the result
of the decision problem is known for all {}``critical values'' \cite{Alt1995}
that could potentially define $\varepsilon^{*}$. Traditionally, such
a sort operates on cell boundaries of constant complexity. The geodesic
case is different because each cell boundary has $O(k)$ complexity.
As a result, a straightforward parametric search based on sorting
these values would require $O(kN^{2}\log kN)$ time even when using
Cole's \cite{Cole1987} optimization.%
\footnote{A variation of the general sorting problem called the {}``nuts and
bolts'' problem (see \cite{Komlos}) is tantalizingly close to an
acceptable $O(N^{2}\log N)$ sort but does not apply to our setting.%
}

We present a randomized algorithm with expected runtime $O(k+N^{2}\log kN\log N)$
and worst-case runtime $O(k+N^{3}\log kN)$. This algorithm is an
order of magnitude faster than parametric search in the expected case.%
\begin{figure}
\begin{centering}
\includegraphics[scale=0.43]{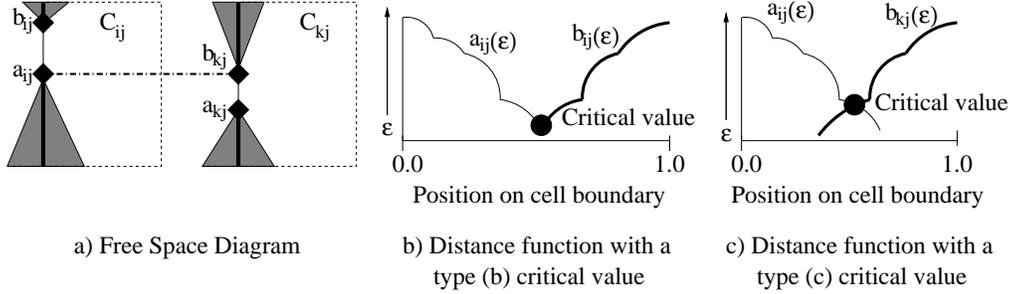} 
\par\end{centering}

\caption{Critical values of the Fréchet distance \label{fig:ParametricSearch}}

\end{figure}

Each cell boundary has at most one free space interval (cf.\ Lemma
\ref{lemma-rudolph}). The upper boundary of this interval is a function
$b_{ij}(\varepsilon)$, and the lower boundary of this interval is
a function $a_{ij}(\varepsilon)$. See Figure \ref{fig:ParametricSearch}a.
The seminal work of Alt and Godau \cite{Alt1995} defines three types
of critical values that are useful for computing the exact geodesic
Fréchet distance. There are exactly two type (a) critical values associated
with distances between the starting points of $A$ and $B$ and the
ending points of $A$ and $B$. Type (b) critical values occur $O(N^{2})$
times when $a_{ij}(\varepsilon)=b_{ij}(\varepsilon)$. See Figure
\ref{fig:ParametricSearch}b. Type (a) and (b) critical values occur
$O(N^{2})$ times and are easily handled in $O(N^{2}\log k\log N)$
time. This process involves computing values in $O(N^{2}\log k)$
time, sorting in $O(N^{2}\log N)$ time, and running the decision
problem in binary search fashion $O(\log N)$ times. %
{}Resolving the type (a) and (b) critical values as a first step will
{} simplify the randomized algorithm for the type (c) critical values.

Alt and Godau \cite{Alt1995} show that type (c) critical values occur
when the position of $a_{ij}(\varepsilon)$ in cell $C_{ij}$ equals
the position of $b_{kj}(\varepsilon)$ in cell $C_{kj}$ in the free
space diagram. See Figure \ref{fig:ParametricSearch}a. As $\varepsilon$
increases, by Lemma \ref{lemma-rudolph}, $a_{ij}(\varepsilon)$ is
$\downarrow$-monotone on the cell boundary and $b_{ij}(\varepsilon)$
is $\uparrow$-monotone (see Figure \ref{fig:ParametricSearch}b).
As illustrated in Figure \ref{fig:ParametricSearch}c, $a_{ij}(\varepsilon)$
and $b_{kj}(\varepsilon)$ intersect at most once. This follows from
the monotonicities of $a_{ij}(\varepsilon)$ and $b_{kj}(\varepsilon)$.
Hence, there are $O(N^{2})$ intersections of $a_{ij}(\varepsilon)$
and $b_{kj}(\varepsilon)$ in row $j$ and a total of $O(N^{3})$
type (c) critical values over all rows. There are also $O(N^{2})$
intersections of $a_{ij}(\varepsilon)$ and $b_{ik}(\varepsilon)$
in \emph{column} $i$ and a total of $O(N^{3})$ additional type (c)
critical values over all columns.

\begin{lemma}\label{lemma:arcIntersections}

The intersection of $a_{ij}(\varepsilon)$ and $b_{kl}(\varepsilon)$
can be found for any $\varepsilon\geq0$ in $O(\log k)$ time after
preprocessing.

\end{lemma}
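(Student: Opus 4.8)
The plan is to read both $a_{ij}(\varepsilon)$ and $b_{kl}(\varepsilon)$ off the funnel distance functions that generate them and then reduce the query to an inversion problem answerable by a binary search over $O(k)$ arcs, in the spirit of Lemma \ref{lemma:intersectionsWithEps}. First I would fix the geometry: for a type~(c) event the two cell boundaries lie on a common segment $\overline{cd}$ (the $j$-th segment of $B$ for a row event, the $i$-th segment of $A$ for a column event), so $a_{ij}(\varepsilon)$ is the \emph{smaller} preimage $t_{1}$ of $\varepsilon$ under a $\downarrow\uparrow$-bitonic funnel function $F_{p,\overline{cd}}$ (cf.\ Lemma \ref{lemma-rudolph}), while $b_{kl}(\varepsilon)$ is the \emph{larger} preimage $t_{2}$ of $\varepsilon$ under another bitonic function $F_{p',\overline{cd}}$, where $p,p'$ are the two curve vertices supplying the apexes. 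Parametrizing $\overline{cd}$ linearly as $q(t)$, each such distance function is a concatenation of $O(k)$ hyperbolic arcs of the form $d(p,v)+|v-q(t)|$, one per window vertex $v$, because shortest paths in a simple polygon only bend at polygon vertices (Section \ref{sub:Funnels-and-Hourglasses}).

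The structural observation is that intersecting $a_{ij}(\varepsilon)$ with $b_{kl}(\varepsilon)$ is the same as finding the \emph{unique} parameter $t^{*}$ that lies on the decreasing branch of $F_{p,\overline{cd}}$ and on the increasing branch of $F_{p',\overline{cd}}$ with $F_{p,\overline{cd}}(t^{*})=F_{p',\overline{cd}}(t^{*})$ --- equivalently, the unique zero of the $\downarrow$-monotone function $\phi(\varepsilon)=a_{ij}(\varepsilon)-b_{kl}(\varepsilon)$. Uniqueness is exactly the ``intersect at most once'' fact already used for type~(c) critical values: on the overlap of the two relevant branches one function is decreasing while the other is increasing. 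The algorithm is then a binary search that locates $t^{*}$ among the $O(k)$ arc breakpoints of the two distance functions; at a probed breakpoint with distance value $\varepsilon_{0}$ we compute the other function's preimage $b_{kl}(\varepsilon_{0})$ in $O(\log k)$ time via Lemma \ref{lemma:intersectionsWithEps}, and the sign of $b_{kl}(\varepsilon_{0})-a_{ij}(\varepsilon_{0})$ tells us on which side of the probe $t^{*}$ lies. Once the search has pinned $t^{*}$ to a configuration interval on which $a_{ij}$ is a single hyperbolic arc and $b_{kl}$ is a single hyperbolic arc, the equation $c_{1}+\sqrt{Q_{1}(t)}=c_{2}+\sqrt{Q_{2}(t)}$ for quadratics $Q_{1},Q_{2}$ reduces, after isolating a radical and squaring twice, to a constant-degree polynomial solvable in $O(1)$ time, keeping the root in the valid sub-interval and on the correct branches. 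The same $O(k)$ shortest-path preprocessing of $P$ \cite{Guibas1989,Hershberger1991} used in Lemma \ref{lemma:intersectionsWithEps} supplies everything needed.

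\textbf{The main obstacle} I expect is bookkeeping rather than a new idea: keeping the whole query within a \emph{single} logarithmic factor --- so that it can serve as the $I(k)=O(\log k)$ cost appearing in Theorem \ref{Thm: intersectionCounting}, where the red curves are the $\downarrow$-monotone $a_{ij}$ and the blue curves the $\uparrow$-monotone $b_{kl}$ --- forces the binary search to descend the arc lists of $F_{p,\overline{cd}}$ and $F_{p',\overline{cd}}$ \emph{in tandem}, halving the combined pool of candidate arcs per step, rather than the naive ``search one function, $O(\log k)$ to evaluate the other'' which would cost $O(\log^{2}k)$; this is analogous to the simultaneous search inside Lemma \ref{lemma:intersectionsWithEps}. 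The second delicate point is the degenerate cases --- when the two relevant branches do not overlap, when the crossing falls outside $\overline{cd}$, or when a preimage is clamped to an endpoint because $\varepsilon$ is below the funnel minimum --- in each of which the correct answer is ``no intersection'' and must be recognized within the same bound. I would dispatch these up front by locating the two branch-split parameters (the geodesic nearest points of $p$ and $p'$ on $\overline{cd}$, available from Lemma \ref{lemma:intersectionsWithEps}) and checking their order before launching the search.
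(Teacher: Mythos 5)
Your proposal takes essentially the same route as the paper: the paper also represents $a_{ij}(\varepsilon)$ and $b_{kl}(\varepsilon)$ through their $O(k)$-arc funnel distance functions, stores these piecewise-monotone curves in binary search trees, and locates their unique crossing by a binary search, finishing with constant-time algebra once a single pair of hyperbolic arcs remains. Your tandem descent of the two arc lists (discarding half of one candidate range per $O(1)$ comparison of breakpoint values) is the right refinement to keep the query at $O(\log k)$ rather than $O(\log^{2}k)$, matching the claimed bound.
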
\begin{proofsketch}Build binary search trees for $a_{ij}(\varepsilon)$
and $b_{kl}(\varepsilon)$ and perform a binary search. See our full
paper \cite{Cook2007} for details.\end{proofsketch}

{}Theorem \ref{Thm: intersectionCounting} requires that all $a_{ij}(\varepsilon)$
and $b_{kl}(\varepsilon)$ are defined in the slab $[\alpha,\beta]\times\mathbb{R}$
that contains $\varepsilon^{*}$. Precomputing the type (a) and type
(b) critical values of \cite{Alt1995} shrinks the slab such that
no \emph{left} endpoint of any relevant $a_{ij}(\varepsilon)$, $b_{kl}(\varepsilon)$
appears in $[\alpha,\beta]\times\mathbb{R}$ when processing the type
(c) critical values. In addition, $a_{ij}(\varepsilon)$, $b_{kl}(\varepsilon)$
can be extended horizontally so that no \emph{right} endpoint appears
in $[\alpha,\beta]\times\mathbb{R}$. These changes do not affect
the asymptotic number of intersections and allow Theorem \ref{Thm: intersectionCounting}
to count and report type (c) critical values in $[\alpha,\beta]\times\mathbb{R}$.%

The below randomized algorithm solves the geodesic Fréchet optimization
problem in $O(k+N^{2}\log kN\log N)$ expected time. This is faster
than the standard parametric search approach which requires $O(kN^{2}\log kN)$
time.\\

\noindent \textbf{Randomized Optimization Algorithm}

\begin{enumerate}
\item Precompute and sort all type (a) and type (b) critical values in $O(N^{2}\log kN)$
time (cf.\ Lemma \ref{lemma:intersectionsWithEps}). Run the decision
problem $O(\log N)$ times to resolve these values and shrink the
potential slab for $\varepsilon^{*}$ down to $[\alpha,\beta]\times\mathbb{R}$
in $O(N^{2}\log k\log N)$ time. 
\item \label{enu:StepCounting}Count the number $\kappa_{j}$ of type (c)
critical values for each row $j$ in the slab $[\alpha,\beta]\times\mathbb{R}$
using Theorem \ref{Thm: intersectionCounting}. Let $C_{j}$ be the
resulting counting data structure for row $j$. 
\item To achieve a fast \emph{expected} runtime, pick a random intersection
$\vartheta_{j}$ for each row using $C_{j}$.%
\footnote{Picking a critical value at random is related to the distance selection
problem \cite{Bespamyatnikh2004} and is mentioned in \cite{Agarwal1998a},
but to our knowledge, this alternative to parametric search has never
been applied to the Fréchet distance.%
} See Theorem \ref{Thm: intersectionCounting}.
\item To achieve a fast \emph{worst-case} runtime, use $C_{j}$ to find
the $a_{Mj}(\varepsilon)$ curve in each row that has the most intersections
(see Theorem \ref{Thm: intersectionCounting}). Add all intersections
in $[\alpha,\beta]\times\mathbb{R}$ that involve $a_{Mj}(\varepsilon)$
to a global pool $\mathcal{P}$ of unresolved critical values%
\footnote{The idea of a global pool is similar to Cole's optimization for parametric
search \cite{Cole1987}.%
} and delete $a_{Mj}(\varepsilon)$ from any future consideration. 
\item \label{enu:globalPoolOperations}Find the median $\Xi$ of the values
in $\mathcal{P}$ in $O(N^{2})$ time using the standard median algorithm
mentioned in \cite{Komlos}. Also find the median $\Psi$ of the $O(N)$
randomly selected $\vartheta_{j}$ in $O(N)$ time using a \emph{weighted}
median algorithm based on the number of critical values $\kappa_{j}$
for each row $j$.
\item \label{enu:StepDecisionPblm}Run the decision problem twice: once
on $\Xi$ and once on $\Psi$. This shrinks the search slab $[\alpha,\beta]\times\mathbb{R}$
and \emph{at least} halves the size of $\mathcal{P}$. Repeat steps
\ref{enu:StepCounting} through \ref{enu:StepDecisionPblm} until
all \emph{row}-based type (c) critical values have been resolved. 
\item Resolve all \emph{column}-based type (c) critical values in the same
spirit as steps \ref{enu:StepCounting} through \ref{enu:StepDecisionPblm}
and return the smallest critical value that satisfied the decision
problem as the value of the geodesic Fréchet distance. 
\end{enumerate}
\begin{theorem}\label{theorem-geodesic}

The exact \emph{geodesic} Fréchet distance between two polygonal curves
$A$ and $B$ inside a simple bounding polygon $P$ can be computed
in $O(k+N^{2}\log kN\log N)$ expected time and $O(k+N^{3}\log kN)$
worst-case time, where $N$ is the larger of the complexities of $A$
and $B$ and $k$ is the complexity of $P$. $O(k+N^{2})$ space is
required.\end{theorem}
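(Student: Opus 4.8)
The plan is to prove Theorem~\ref{theorem-geodesic} by analyzing the Randomized Optimization Algorithm and reusing everything already in place. For correctness, recall from Alt and Godau~\cite{Alt1995} that $\varepsilon^{*}=\delta_{F}(A,B)$ is one of the $O(1)$ type~(a), $O(N^{2})$ type~(b), or $O(N^{3})$ type~(c) critical values, that Step~1 resolves the type~(a) and~(b) values, and that every remaining row-based type~(c) value is an intersection of a lower-boundary curve $a_{ij}(\varepsilon)$ with an upper-boundary curve $b_{kj}(\varepsilon)$ in the same row. By Lemma~\ref{lemma-rudolph} these are, respectively, continuous $x$-monotone decreasing and increasing curves, i.e.\ exactly the red/blue input of Theorem~\ref{Thm: intersectionCounting}, with $I(k)=O(\log k)$ supplied by Lemma~\ref{lemma:arcIntersections}; horizontal extension makes them span the whole slab as that theorem requires. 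The key invariant is that the current slab $[\alpha,\beta]\times\R$ always contains $\varepsilon^{*}$: every decision-problem call is made at the $\varepsilon$-coordinate of a concrete critical value (a median $\Xi$ or a weighted median $\Psi$), so its Boolean answer says on which side of that value $\varepsilon^{*}$ lies, and the slab is shrunk accordingly, permanently resolving every critical value that thereby leaves the slab. Since the loop runs until all type~(c) values (row-based, then column-based) are resolved and $\varepsilon^{*}$ is itself one of them, the smallest critical value on which the decision problem returned true equals $\varepsilon^{*}$ — the standard Alt--Godau conclusion.

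Next I would bound one iteration of Steps~2--6. Over the $O(N)$ rows, counting the $\kappa_{j}$ costs $O(N\log N)$ per row by Theorem~\ref{Thm: intersectionCounting}, hence $O(N^{2}\log N)$; picking a random $\vartheta_{j}$, identifying the curve $a_{Mj}(\varepsilon)$ with the most intersections, and reporting its $O(N)$ in-slab intersections costs $O(N\log N+N\cdot I(k))=O(N\log kN)$ per row, hence $O(N^{2}\log kN)$; the weighted median $\Psi$ of the $O(N)$ values $\vartheta_{j}$ costs $O(N)$, the median $\Xi$ of the $O(N^{2})$ pooled values costs $O(N^{2})$, and the two decision-problem calls cost $O(N^{2}\log k)$ by Theorem~\ref{thm: DecisionProblem}. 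So one iteration takes $O(N^{2}\log kN)$ time, and Step~1 takes $O(N^{2}\log kN+N^{2}\log k\log N)$ time, all after the one-time $O(k)$ shortest-path preprocessing of~\cite{Guibas1989}.

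The crux is the number of iterations. For the worst case I would argue Cole-style~\cite{Cole1987}: each iteration deletes, in every row, the curve $a_{Mj}(\varepsilon)$ currently carrying the most unresolved intersections, so after $O(N)$ iterations every $a$-curve in every row is gone; moreover, whenever such a curve is deleted all of its in-slab intersections are added to the global pool $\mathcal{P}$, and the decision call on the median $\Xi$ of $\mathcal{P}$ removes at least half of $\mathcal{P}$, so $O(\log N)$ further iterations drain $\mathcal{P}$ and resolve every row-based type~(c) value; the columns contribute another $O(N)$ iterations, for $O(N)$ total and an $O(k+N^{3}\log kN)$ worst-case bound. For the \emph{expected} bound I would show that one iteration shrinks $\Sigma:=\sum_{j}\kappa_{j}$ by a constant factor in expectation: since each $\vartheta_{j}$ is a uniformly random critical value of row $j$, the weighted median $\Psi$ (weights $\kappa_{j}$) acts, on whichever side of $\varepsilon^{*}$ carries the larger share of $\Sigma$, like a randomly chosen pivot, so the decision call on $\Psi$ resolves a constant fraction of $\Sigma$ in expectation. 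Since $\Sigma=O(N^{3})$ initially, a Markov-inequality boosting step — each iteration independently shrinks $\Sigma$ by a fixed factor with constant probability, hence in expectation every $O(1)$ iterations $\Sigma$ drops by a constant factor — yields $O(\log N^{3})=O(\log N)$ expected iterations, hence $O(k+N^{2}\log kN\log N)$ expected time.

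The space bound is immediate: $O(k)$ for the shortest-path structures of~\cite{Guibas1989,Hershberger1991} (which also back the $O(\log k)$ arc searches of Lemmas~\ref{lemma:intersectionsWithEps} and~\ref{lemma:arcIntersections}), and $O(N^{2})$ for the $O(N^{2})$ cell-boundary funnels (each specified in $O(1)$ words on top of the global structure), the per-row counting structures $C_{j}$, and the pool $\mathcal{P}$. I expect the main obstacle to be the expected-iteration analysis: making rigorous both that the weighted median of the random per-row critical values resolves a constant fraction of \emph{all} remaining critical values (the $\vartheta_{j}$ are row-local, so one must argue the weighted median still behaves like a good global pivot), and the passage from ``expected constant-factor decrease per round'' to ``$O(\log N)$ rounds in expectation'' via the Markov-type boosting rather than a naive appeal to linearity of expectation — all while checking that this randomized outer loop correctly interleaves with the deterministic, $\mathcal{P}$-driven process that certifies the worst-case bound.
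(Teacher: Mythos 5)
Your proposal is correct and follows essentially the same route as the paper's proof: it analyzes the stated randomized algorithm with a per-iteration cost of $O(N^{2}\log kN)$, obtains $O(\log N)$ expected iterations via the Quicksort-style pivot argument applied to the weighted median $\Psi$, obtains $O(N)$ worst-case iterations from the deletion of one $a_{Mj}(\varepsilon)$ per row (and column) per iteration together with the halving of the pool $\mathcal{P}$, and bounds space by $O(k)$ preprocessing plus the $O(N^{2})$ pool. The paper's own proof is in fact terser than yours on exactly the points you flag (the expected constant-fraction decrease and the pool-size recurrence $S(x)\leq\frac{S(x-1)+O(N^{2})}{2}$), so your added care there is a refinement of, not a departure from, its argument.
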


\begin{proof}
Preprocess $P$ once for shortest path queries in $O(k)$ time \cite{Guibas1989}.
In the expected case, each execution of the decision problem will
eliminate a constant fraction of the remaining type (c) critical values
due to the proof of Quicksort's expected runtime and the median of
medians approach for $\Psi$. Consequently, the expected number of
iterations of the algorithm is $O(\log N^{3})=O(\log N)$.

In the worst-case, each of the $O(N)$ $a_{ij}(\varepsilon)$ in a
row will be picked as $a_{Mj}(\varepsilon)$. Therefore, each row
can require at most $O(N)$ iterations. Since \emph{all} rows are
processed each iteration, the entire algorithm requires at most $O(N)$
iterations for \emph{row}-based critical values. By a similar argument,
\emph{column}-based critical values also require at most $O(N)$ iterations.

The size of the pool $\mathcal{P}$ is expressed by the inequality
$S(x)\leq\frac{S(x-1)+O(N^{2})}{2}$, where $x$ is the current step
number, and $S(0)=0$. Intuitively, each step adds $O(N^{2})$ values
to $\mathcal{P}$ and then at least half of the values in $\mathcal{P}$
are always resolved using the median $\Xi$. It is not difficult to
show that $S(x)\in O(N^{2})$ for any step number $x$.

Each iteration of the algorithm requires intersection counting and
intersection calculations for $O(N)$ rows (or columns) at a cost
of $O(N^{2}\log kN)$ time. In addition, the global pool $\mathcal{P}$
has its median calculated in $O(N^{2})$ time, and the decision problem
is executed in $O(N^{2}\log k)$ time. Consequently, the expected
runtime is $O(k+N^{2}\log kN\log N)$ and the worst-case runtime is
$O(k+N^{3}\log kN)$ including $O(k)$ preprocessing time \cite{Guibas1989}
for geodesics. The preprocessing structures use $O(k)$ space that
must remain allocated throughout the algorithm, and the pool $\mathcal{P}$
uses $O(N^{2})$ additional space. 
\end{proof}
{}Although the exact non-geodesic Fréchet distance is normally found
in $O(N^{2}\log N)$ time using parametric search (see \cite{Alt1995}),
parametric search is often regarded as impractical because it is difficult
to implement%
\footnote{Quicksort-based parametric search has been implemented by van Oostrum
and Veltkamp \cite{Oostrum2002} using a complex framework.%
} and involves enormous constant factors \cite{Cole1987}. To the best
of our knowledge, the randomized algorithm in section \ref{sub:ParametricSearch}
provides the first practical alternative to parametric search for
solving the exact non-geodesic Fréchet optimization problem in $\mathbb{R}^{l}$.

\begin{theorem} \label{theorem-non-geodesic-frechet}

The exact \emph{non-geodesic} Fréchet distance between two polygonal
curves $A$ and $B$ in $\mathbb{R}^{l}$ can be computed in $O(N^{2}\log^{2}N)$
expected time, where $N$ is the larger of the complexities of $A$
and $B$. $O(N^{2})$ space is required.

\end{theorem}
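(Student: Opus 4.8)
The plan is to observe that the non-geodesic Fr\'echet distance in $\mathbb{R}^{l}$ is precisely the special case of the geodesic problem in which there is no bounding polygon, i.e.\ $k=O(1)$. First I would note that when distances are measured by the $L_{2}$ metric rather than a geodesic metric, all of the structural results of Sections \ref{sec:Preliminaries}--\ref{sec:Geodesic-Cell-Properties} remain valid: as remarked in Section \ref{sub:Funnels-and-Hourglasses}, Lemma \ref{lemma-rudolph}, Corollary \ref{corollary-rudolph}, and Lemma \ref{lemma-hourglass-bitonicity} are generalizations of classical Euclidean facts (the distance from a fixed point to the points of a segment is convex, hence $\downarrow\uparrow$-bitonic), so Lemma \ref{one-free-region-per-cell} still yields at most one free space region per cell, and that region is $x$-monotone, $y$-monotone, and connected. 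This is exactly the Alt--Godau \cite{Alt1995} free space structure.

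Next I would check that every per-cell computation that cost $O(\log k)$ in the geodesic setting now costs $O(1)$. A Euclidean cell boundary is the intersection of a segment with a ball of radius $\varepsilon$, so the minimum of $F_{p,\,\overline{cd}}$ and the (at most two) values $t_{1},t_{2}$ with $F_{p,\,\overline{cd}}(t_{1})=F_{p,\,\overline{cd}}(t_{2})=\varepsilon$ of Lemma \ref{lemma:intersectionsWithEps} are roots of a quadratic, found in $O(1)$ time; likewise the intersection of $a_{ij}(\varepsilon)$ and $b_{kl}(\varepsilon)$ of Lemma \ref{lemma:arcIntersections} is computed in $O(1)$ time, so the quantity $I(k)$ appearing in Theorem \ref{Thm: intersectionCounting} is $O(1)$. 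The functions $a_{ij}(\varepsilon)$ are still $\downarrow$-monotone and the $b_{kl}(\varepsilon)$ still $\uparrow$-monotone in $\varepsilon$ by Lemma \ref{lemma-rudolph}, intersecting at most once, so the red-blue counting, reporting, and random-selection machinery of Theorem \ref{Thm: intersectionCounting} applies verbatim after the usual shrinking of the slab $[\alpha,\beta]\times\mathbb{R}$ via the type (a) and type (b) critical values and the horizontal extension of the arcs.

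Finally I would invoke the Randomized Optimization Algorithm of Section \ref{sub:ParametricSearch} with $k=O(1)$, equivalently specializing Theorem \ref{theorem-geodesic}. The decision problem (Theorem \ref{thm: DecisionProblem}) then runs in $O(N^{2})$ time; each iteration performs $O(N^{2})$ intersection and counting operations together with $O(1)$ decision-problem calls; and the expected number of iterations is $O(\log N^{3})=O(\log N)$ by the Quicksort/weighted-median analysis already given in the proof of Theorem \ref{theorem-geodesic}. Hence the expected running time is $O(N^{2}\log N\cdot\log N)=O(N^{2}\log^{2}N)$, and the $O(k+N^{2})$ space bound becomes $O(N^{2})$. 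The only real obstacle is to verify that nothing in the randomized optimization argument secretly relied on the polygon being present; it does not, since that argument uses only the monotonicity of the cell-boundary functions $a_{ij}(\varepsilon)$, $b_{kl}(\varepsilon)$ and their at-most-one-intersection property, both of which hold in the Euclidean setting, so the entire pipeline of Theorem \ref{theorem-geodesic} specializes cleanly.
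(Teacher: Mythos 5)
Your proposal is correct and follows essentially the same route as the paper: the paper's proof simply invokes the argument of Theorem \ref{theorem-geodesic}, noting that non-geodesic distances are computed in $O(1)$ time instead of $O(\log k)$ time, which is exactly your specialization to $k=O(1)$. Your expansion (constant-time cell boundaries and arc intersections, $O(N^{2})$ decision problem, $O(\log N)$ expected iterations at $O(N^{2}\log N)$ each) is a faithful, more detailed rendering of the paper's two-sentence proof.
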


\begin{proof}
The argument is very similar to the proof of Theorem \ref{theorem-geodesic}.
The main difference is that non-geodesic distances can be computed
in $O(1)$ time (instead of $O(\log k)$ time).
\end{proof}

% \vskip-0.3cm
\section{Geodesic Fréchet Distance in a Polygonal Domain with Obstacles\label{sec:continuousGeodesicFrechet}}

Consider the real-life situation of a person walking a dog in a park.
If the person and dog walk on opposite sides of a group of trees,
then the leash must go around the trees. % (see Figures \ref{fig:continuousGeodesicFrechet}a,
%\ref{fig:continuousGeodesicFrechet}b, and \ref{fig:continuousGeodesicFrechet}c).
More formally, suppose the two polygonal curves $A$ and $B$ lie
in a planar polygonal domain $\mathcal{D}$ \cite{Mitchell1998} of
complexity $k$. The leash is required to change continuously, i.e.,
it must stay inside $\mathcal{D}$ and may not pass through or jump
over an obstacle. It may, however, cross itself. Let $\delta_{C}$
be the geodesic Fréchet distance for this scenario when the leash
length is measured geodesically.%
\footnote{We recently learned that this topic has been independently explored
in \cite{Chambers2007}.%
}

Due to the continuity of the leash's motion, the free space inside
a geodesic cell is represented by an hourglass -- just as it was for
the geodesic Fréchet distance inside a simple polygon. Hence, free
space in a cell is $x$-monotone, $y$-monotone, and connected (cf.\ Lemma
\ref{one-free-region-per-cell}), and reachability information can
be propagated through a cell in constant time. % (cf. Lemma \ref{lemma-cell-reachability}).

The main task in computing $\delta_{C}$ is to construct all cell
boundaries. Once the cell boundaries are known, the decision and optimization
problems can be solved by the algorithms for the geodesic Fréchet
distance inside a simple polygon (cf.\ Theorems \ref{thm: DecisionProblem}
and \ref{theorem-geodesic}). We use Hershberger and Snoeyink's homotopic
shortest paths algorithm \cite{Hershberger1991} to incrementally
construct all cell boundary funnels needed to compute $\delta_{C}$.
To use the homotopic algorithm, the polygonal domain $\mathcal{D}$
should be triangulated in $O(k\log k)$ time \cite{Mitchell1998},
and all obstacles should be replaced by their vertices. A shortest
path map \cite{Mitchell1998} can find an initial geodesic leash $L_{I}$
between the start points of the polygonal curves $A$ and $B$ in
$O(k\log k)$ time.

\begin{lemma}\label{lemma-continuousCells}

Given the initial leash for the bottom-left corner of a $\delta_{C}$-cell
$\mathcal{C}$, all four funnel boundaries of $\mathcal{C}$ and the
initial leashes for cells adjacent to $\mathcal{C}$ can be computed
in $O(k)$ time.

\end{lemma}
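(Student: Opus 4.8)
\textbf{Proof proposal for Lemma \ref{lemma-continuousCells}.}

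The plan is to use the homotopic shortest paths machinery of Hershberger and Snoeyink \cite{Hershberger1991} to propagate a funnel from one side of a $\delta_C$-cell $\mathcal{C}$ to the adjacent side, one cell edge at a time, and to observe that the cost of each such propagation is $O(k)$ amortized (or worst-case, under the appropriate charging) per cell. First I would fix notation: the cell $\mathcal{C}$ corresponds to an edge $\overline{ab}$ of $A$ and an edge $\overline{cd}$ of $B$, and its four corners are the leash positions $\pi(a,c)$, $\pi(a,d)$, $\pi(b,c)$, $\pi(b,d)$; the four cell boundaries correspond to the four funnels $\mathcal{F}_{a,\overline{cd}}$, $\mathcal{F}_{b,\overline{cd}}$, $\mathcal{F}_{\overline{ab},c}$, $\mathcal{F}_{\overline{ab},d}$. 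Given the initial leash $L_I=\pi(a,c)$ for the bottom-left corner, I would first build the funnel $\mathcal{F}_{a,\overline{cd}}$ by sweeping the endpoint from $c$ to $d$ along $\overline{cd}$ while maintaining the homotopy class of the leash; Hershberger–Snoeyink's algorithm maintains such a funnel under an edge-insertion sweep in time proportional to the combinatorial change in the funnel plus a logarithmic overhead, and the total complexity of any one funnel is $O(k)$ since shortest paths in a triangulated domain on $k$ vertices have $O(k)$ links. The same sweep, applied with the apex moving from $a$ to $b$, produces $\mathcal{F}_{\overline{ab},c}$; composing these two sweeps (apex $a\to b$, then endpoint $c\to d$, or vice versa) reaches the opposite corner and yields the remaining two funnels $\mathcal{F}_{b,\overline{cd}}$ and $\mathcal{F}_{\overline{ab},d}$, as well as the leash $\pi(b,d)$.

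The key structural point I would invoke is the one already granted in the text: because the leash must vary continuously and cannot jump over obstacles, the homotopy class of the leash is determined throughout the cell by the single initial leash $L_I$ — there is no ambiguity about "which way around" an obstacle the leash goes, so the funnel produced by the homotopic sweep is exactly the right one, and it is $x$-monotone/$y$-monotone/connected by the reasoning of Lemma \ref{one-free-region-per-cell}. The initial leashes for the cells adjacent to $\mathcal{C}$ are then simply read off: the cell to the right of $\mathcal{C}$ (advancing along $A$) has bottom-left corner $\pi(b,c)$, which is the apex of $\mathcal{F}_{\overline{ab},c}$ at the endpoint $b$; the cell above $\mathcal{C}$ (advancing along $B$) has bottom-left corner $\pi(a,d)$, which is the endpoint of $\mathcal{F}_{a,\overline{cd}}$ at $d$. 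Both of these geodesics are produced as a by-product of the two sweeps above, so no extra work is needed to hand them off to neighbouring cells.

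The main obstacle is the running-time bookkeeping: a single edge-insertion/endpoint-slide step in the homotopic algorithm costs $O(\log k)$ plus the size of the portion of the funnel that is destroyed, and naively each of the $O(N^2)$ cells could touch $\Theta(k)$ funnel complexity, which is exactly the $O(k)$ per cell we want — but one must be careful that the "$O(\log k)$ per step" terms and the cost of re-locating the sliding endpoint in the current funnel do not accumulate to more than $O(k)$ per cell. I would handle this by maintaining the funnel in the balanced-tree (finger-search) representation used in \cite{Hershberger1991}, so that each elementary update is $O(\log(\text{funnel size}))$ and the number of such updates in one sweep is bounded by the number of combinatorial changes to the funnel, which is $O(k)$; summing the $\log$ factors against the $O(k)$ changes and absorbing them gives $O(k\log k)$ in the worst case for a single cell, but since each vertex of $P$ enters and leaves a given funnel $O(1)$ times along a monotone sweep, the bound tightens to $O(k)$ per cell as claimed. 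I would close by noting that the triangulation and shortest-path-map preprocessing ($O(k\log k)$ total, done once) are what make the initial leash $L_I$ available in the first place, and that everything after that is purely local to $\mathcal{C}$.
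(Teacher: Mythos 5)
There is a genuine gap, and it sits at the heart of the lemma. Your argument treats the four cell boundaries as ordinary geodesic funnels $\mathcal{F}_{a,\overline{cd}}$, $\mathcal{F}_{b,\overline{cd}}$, etc., with corners given by the free shortest paths $\pi(a,c),\dots,\pi(b,d)$, and you repeatedly invoke the bound that any such funnel has $O(k)$ complexity. In the polygonal-domain setting of $\delta_C$ this is false: because the leash must move continuously and may not jump over obstacles, the relevant paths are \emph{homotopic} shortest paths whose homotopy class is inherited from the entire history of the motion, not free geodesics. Such a leash can wind around the obstacles $O(N)$ times and therefore has $\Theta(kN)$ complexity; in particular the given initial leash $L_I$ for the bottom-left corner of $\mathcal{C}$ may already have $O(kN)$ vertices, and the boundary funnels of $\mathcal{C}$ have $O(kN)$ complexity as well (the paper needs this fact later, in the proof of Theorem \ref{theorem-continuousDecisionProblem}, where only the \emph{distance function} is argued to have $O(k)$ complexity). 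Consequently your corner identification is wrong as stated, and any sweep that maintains or even reads the whole leash cannot be charged at $O(k)$ per cell.

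The missing idea is precisely the paper's mechanism for touching only a bounded suffix of the wound leash: extending $L_I$ by a \emph{single} segment (along $\overline{ab}$ or $\overline{cd}$, or one then the other, giving the four combinatorially distinct extensions) can unwind at most one loop around a subset of the obstacles, so only the most recent $O(k)$ vertices of the resulting sketch can change when it is snapped to the true homotopic shortest path; these vertices are located by backtracking along the sketch with a turning angle until the accumulated angle differs from the final angle by at least $2\pi$, and only that suffix is handed to the Hershberger--Snoeyink machinery \cite{Hershberger1991}. Without this unwinding/turning-angle argument your time bound does not go through: your own accounting yields $O(k\log k)$ per cell at best, and the final step where you tighten it to $O(k)$ by asserting that each vertex of $P$ enters and leaves the funnel $O(1)$ times is unsupported (a vertex can appear many times on a single wound leash). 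The hand-off of initial leashes to adjacent cells is fine in spirit and matches the paper, but the core $O(k)$-per-cell claim needs the sketch-and-snap argument, not a generic monotone sweep over $O(k)$-size funnels.
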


\begin{proof}
The funnels representing cell boundaries are constructed \emph{incrementally}.
The idea is to extend the initial leash into a homotopic {}``sketch''
that describes how the shortest path should wind through the obstacles
and then to {}``snap'' this sketch into a shortest path (see Figures
\ref{fig:continuousFunnel}a and \ref{fig:continuousFunnel}b).%
\begin{figure}
\begin{centering}
\includegraphics[scale=0.4]{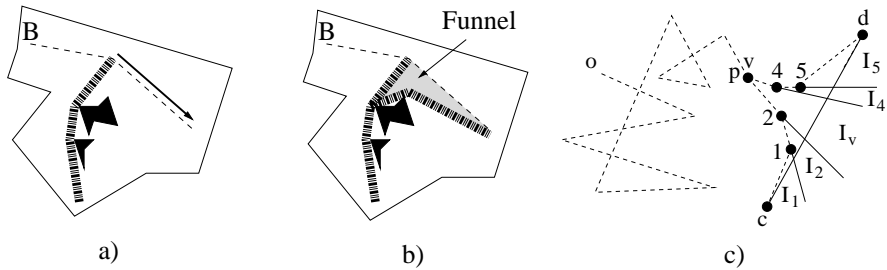} 
\par\end{centering}
\vskip-0.2cm
\caption{a) A funnel for a $\delta_{C}$-cell can be found by extending a cell's
initial leash along one segment to create a path sketch and then b)
snapping this sketch into a homotopic shortest path. c) A funnel $\mathcal{F}_{o,\ \overline{cd}}$
has $O(kN)$ complexity, but the distance function $F_{o,\ \overline{cd}}$
has only $O(k)$ complexity because $d(o,p)$ is a constant.\label{fig:continuousFunnel}}

\end{figure}

Homotopic shortest paths have increased complexity over normal shortest
paths because they can loop around obstacles.%
{} For example, if the person walks in a triangular path around all
the obstacles, then the leash follows a homotopic shortest path that
can have $O(k)$ complexity in a single cycle around the obstacles.
By repeatedly winding around the obstacles $O(N)$ times, a path achieves
$O(kN)$ complexity.

To avoid spending $O(kN)$ time per cell, we extend a previous homotopic
shortest path into a sketch by appending a single line segment to
the previous path (see Figure \ref{fig:continuousFunnel}a). Adding
this single segment can unwind at most one loop over a subset of obstacles,
so only the most recent $O(k)$ vertices of the sketch will need to
be updated when the sketch is snapped into the true homotopic shortest
path. A turning angle is used to identify these $O(k)$ vertices by
backtracking on the sketch until the angle is at least $2\pi$ different
from the final angle. %

Putting all this together, a boundary for a free space cell can be
computed in $O(k)$ time by starting with an initial leash $L_{I}$
of $O(kN)$ complexity, constructing a homotopic sketch by appending
a single segment to $L_{I}$, backtracking with a turning angle to
find $O(k)$ vertices that are eligible to be changed, and finally
{}``snapping'' these $O(k)$ vertices to the true homotopic shortest
path using Hershberger and Snoeyink's algorithm \cite{Hershberger1991}.
The result is a funnel that describes one cell boundary.

By extending $L_{I}$ in four combinatorially distinct ways, all four
cell boundaries can be defined. Specifically, we can extend $L_{I}$
along the current $\overline{ab}\in A$ segment to form the first
funnel or along the $\overline{cd}\in B$ segment to form the second
funnel. The third funnel is created by extending $L_{I}$ along $\overline{ab}\in A$
and then $\overline{cd}\in B$. The fourth funnel is created by extending
$L_{I}$ along $\overline{cd}\in B$ and then $\overline{ab}\in A$.
These cell boundaries conveniently define the initial leash for cells
that are adjacent to $\mathcal{C}$.
\end{proof}
\begin{theorem}\label{theorem-continuousDecisionProblem}

The $\delta_{C}$ decision problem can be solved in $O(kN^{2})$ time
and $O(k+N)$ space.

\end{theorem}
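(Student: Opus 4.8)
The plan is to reuse the dynamic-programming machinery of Theorem~\ref{thm: DecisionProblem} almost verbatim, replacing only the cell-boundary computation. As already observed in this section, the free space inside a $\delta_C$-cell is — exactly as in Lemma~\ref{one-free-region-per-cell} — a single $x$- and $y$-monotone connected region, so reachability still propagates through a cell in $O(1)$ time; the genuinely new ingredient is that each cell boundary is now a homotopic funnel in $\mathcal{D}$ that must be built incrementally rather than queried. First I would preprocess: triangulate $\mathcal{D}$ and build a shortest path map in $O(k\log k)$ time, and use the map to obtain the initial leash at the bottom-left corner of $C_{11}$, i.e.\ the geodesic from $A(0)$ to $B(0)$; this cost is within the claimed bound.

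Next I would traverse all $N^2$ free space cells in an order in which consecutive cells are grid-adjacent in the free space diagram (for instance a boustrophedon sweep), so that the initial leash at each cell's bottom-left corner is always available from a previously processed neighbour through Lemma~\ref{lemma-continuousCells}. At the current cell I spend $O(k)$ time to construct its four boundary funnels together with the initial leashes of its neighbours (Lemma~\ref{lemma-continuousCells}); then, for each of the four boundaries, I scan the $O(k)$ arcs of its $\downarrow\uparrow$-bitonic distance function $F$ (the generalization of Lemma~\ref{lemma-rudolph}; recall from Figure~\ref{fig:continuousFunnel}c that this function has only $O(k)$ complexity even though the funnel itself may have $\Theta(kN)$ complexity) to locate the at most two values with $F=\varepsilon$, giving the free space interval on that boundary in $O(k)$ time; finally I propagate reachability into the right and top boundaries in $O(1)$ time. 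Summed over the $N^2$ cells this is $O(kN^2)$ time, and the decision problem answers ``yes'' iff the top-right corner $(1,1)$ of the free space diagram is reachable from $(0,0)$, by the Alt--Godau criterion.

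The point that needs real care is the $O(k+N)$ space bound. The reachability table needs only two rows of the free space diagram, i.e.\ $O(N)$ words, and the triangulation and shortest path map use $O(k)$ words, so the only danger is the leash data: a single cell-boundary funnel, and likewise an initial leash, may have $\Theta(kN)$ complexity, so we certainly cannot afford to keep a whole row of them. The way out — and the crux of the bound — is that a leash is never needed in full: when Lemma~\ref{lemma-continuousCells} extends a leash by one segment and re-snaps it, only the $O(k)$ vertices near the active end are touched (they are identified by backtracking until the turning angle has changed by $2\pi$), so it suffices to keep at any time a single active leash, represented by its $O(k)$ most recently modified vertices plus a constant-size summary — an accumulated turning angle / winding count — of its frozen prefix, and to choose the traversal so that exactly one leash is ever active.

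Making this last point rigorous is the step I expect to be the main obstacle: one has to argue that, for the chosen cell order, the frozen prefix of the active leash is genuinely never revisited, and that the succinct summary is enough to correctly resume the snapping of Lemma~\ref{lemma-continuousCells} at later cells. Once that is established, the $O(k+N)$ space bound follows, and everything else is a direct transcription of the simple-polygon decision algorithm of Theorem~\ref{thm: DecisionProblem}.
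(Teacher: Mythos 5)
Your proposal follows essentially the same route as the paper's proof: build each cell's boundary funnels incrementally from the previous cell's leash via Lemma~\ref{lemma-continuousCells} at $O(k)$ per cell, exploit the fact that the distance function $F_{o,\overline{cd}}$ has only $O(k)$ complexity (even though the funnel itself may have $\Theta(kN)$ complexity) to extract the free space interval on each boundary, propagate reachability in constant time per cell, and keep only two rows of the dynamic program, for $O(kN^2)$ time overall. The one place you go beyond the paper is the space accounting for the leashes themselves: the paper's proof simply states that storing two rows at a time gives $O(k+N)$ space and does not discuss the fact that a single initial leash or boundary funnel can have $\Theta(kN)$ complexity, whereas you correctly identify this as the delicate point and sketch a fix (a single active leash kept as its $O(k)$ mutable suffix plus a constant-size winding summary of the frozen prefix). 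That concern is legitimate -- note in particular that successive single-segment extensions can each unwind one loop, so over many cells the ``frozen'' prefix can eventually be re-exposed, which is exactly why your summary would have to be able to regenerate the wound portion -- but resolving it is additional machinery that the paper neither supplies nor requires of you; relative to the paper's own argument, your proposal is complete once you, like the paper, take Lemma~\ref{lemma-continuousCells} and the two-row storage claim at face value.
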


\begin{proof}
Each cell boundary is a funnel $\mathcal{F}_{o,\ \overline{cd}}$
with $O(kN)$ complexity \cite{Duncan2006}. However, this high complexity
is a result of looping over obstacles, and most of these points do
not affect the funnel's distance function $F_{o,\ \overline{cd}}$.
As illustrated in Figure \ref{fig:continuousFunnel}c, $F_{o,\ \overline{cd}}$
has only $O(k)$ complexity because only vertices $\pi(p,c)\cup\pi(p,d)$
contribute arcs to $F_{o,\ \overline{cd}}$.

Construct all cell boundary funnels in $O(kN^{2})$ time (cf.\ Lemma
\ref{lemma-continuousCells}), intersect each funnel's distance function
with $y=\varepsilon$ in $O(N^{2}\log k)$ time, and propagate reachability
information in $O(N^{2})$ time. Only $O(k+N)$ space is needed for
dynamic programming when storing only two rows at a time. 
\end{proof}
\begin{theorem}\label{theorem-continuousOptimizationProblem}

The $\delta_{C}$ optimization problem can be solved in $O(kN^{2}+N^{2}\log kN\log N)$
expected time and $O(kN^{2})$ space.%
\footnote{If space is at a premium, the algorithm can also run with $O(k+N^{2})$
space and $O(kN^{2}\log N+N^{2}\log kN\log N)$ expected time by recomputing
the funnels each time the decision problem is computed. Note that
$O(N^{2})$ storage is required for the red-blue intersections algorithm
(cf.\ Theorem \ref{theorem-geodesic}).%
}

\end{theorem}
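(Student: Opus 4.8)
The plan is to follow the template already established by Theorem~\ref{theorem-continuousDecisionProblem} and Theorem~\ref{theorem-geodesic}: once we can produce all cell boundary funnels, the optimization machinery from section~\ref{sub:ParametricSearch} applies essentially verbatim, with the only change being the cost of the decision problem and the cost of computing a single distance value. First I would invoke Lemma~\ref{lemma-continuousCells} to argue that the entire set of $O(N^{2})$ cell boundary funnels can be built incrementally in $O(kN^{2})$ total time by sweeping through the free space diagram cell by cell, propagating initial leashes to adjacent cells; these funnels are stored (this is where the $O(kN^{2})$ space bound comes from). Each funnel $\mathcal{F}_{o,\ \overline{cd}}$ has $O(kN)$ combinatorial complexity but, as observed in the proof of Theorem~\ref{theorem-continuousDecisionProblem}, its distance function $F_{o,\ \overline{cd}}$ has only $O(k)$ arcs, so all the structural facts needed by the randomized algorithm — bitonicity (Lemmas~\ref{lemma-rudolph}, \ref{lemma-hourglass-bitonicity}), one free region per cell (Lemma~\ref{one-free-region-per-cell}), $O(\log k)$ intersection queries (Lemmas~\ref{lemma:intersectionsWithEps}, \ref{lemma:arcIntersections}) — carry over unchanged once the funnels are in hand.

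Next I would run the Randomized Optimization Algorithm from section~\ref{sub:ParametricSearch} with the single substitution that each decision-problem call now costs $O(kN^{2})$ time (Theorem~\ref{theorem-continuousDecisionProblem}) rather than $O(N^{2}\log k)$, while a single $a_{ij}(\varepsilon)\cap b_{kl}(\varepsilon)$ intersection still costs $O(\log k)$ (the binary-search argument of Lemma~\ref{lemma:arcIntersections} depends only on the $O(k)$-arc distance functions, which are already available from the stored funnels). The critical-value structure of Alt and Godau is unchanged: two type (a) values, $O(N^{2})$ type (b) values, and $O(N^{3})$ type (c) values resolved row by row and column by column via Theorem~\ref{Thm: intersectionCounting}. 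The expected number of iterations is $O(\log N)$, exactly as in Theorem~\ref{theorem-geodesic}, by the Quicksort / median-of-medians argument. Per iteration we pay $O(N^{2}\log kN)$ for intersection counting and random-intersection extraction over all rows (or columns), $O(N^{2})$ for the global-pool median, and $O(kN^{2})$ for the two decision-problem executions; summing over $O(\log N)$ expected iterations and adding the one-time $O(kN^{2})$ funnel construction and $O(k\log k)$ triangulation/shortest-path-map preprocessing gives $O(kN^{2}+N^{2}\log kN\log N)$ expected time. The space bound is $O(kN^{2})$, dominated by storing the funnels; the $O(N^{2})$ scratch space of Theorem~\ref{Thm: intersectionCounting} and the $O(N^{2})$ pool $\mathcal{P}$ are absorbed.

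I would also record the time/space tradeoff mentioned in the footnote: if the $O(kN^{2})$-space funnel table is unaffordable, one recomputes all funnels ($O(kN^{2})$ time) on each of the $O(\log N)$ decision-problem rounds, yielding $O(kN^{2}\log N+N^{2}\log kN\log N)$ expected time in $O(k+N^{2})$ space, where the $O(N^{2})$ term is forced by the red-blue intersection structure.

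The main obstacle I anticipate is bookkeeping rather than a genuinely new idea: I must make sure the incremental funnel construction of Lemma~\ref{lemma-continuousCells} really does visit every cell exactly once with a correctly propagated initial leash, so that the amortized cost is $O(k)$ per cell and the four funnels produced at a cell are precisely the initial leashes consumed by its neighbors — i.e., that the dependency graph of initial leashes is a spanning structure (a BFS/DFS of the grid of cells) with no circular dependency and no recomputation. Once that sweep is justified, everything else is a direct quotation of the earlier results, so the proof is short.
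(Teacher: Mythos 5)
Your overall structure matches the paper's proof (precompute and store all cell-boundary funnels, then run the randomized red-blue optimization with $O(\log N)$ expected iterations), but your cost accounting contains a genuine error that, as written, fails to establish the claimed bound. You charge each decision-problem call at $O(kN^{2})$ time, citing Theorem~\ref{theorem-continuousDecisionProblem}, and then sum over $O(\log N)$ expected iterations to get $O(kN^{2}+N^{2}\log kN\log N)$ --- but $O(\log N)$ iterations at $O(kN^{2})$ per decision call gives $O(kN^{2}\log N+N^{2}\log kN\log N)$, which is exactly the weaker tradeoff bound of the footnote (the variant that \emph{recomputes} funnels each round), not the theorem's bound. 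The whole point of the $O(kN^{2})$ space investment, and the step your argument omits, is that the funnel construction is a \emph{one-time} preprocessing cost: once the funnels (equivalently, their $O(k)$-arc distance functions) are stored, each subsequent decision call only needs to intersect each stored distance function with $y=\varepsilon$ in $O(\log k)$ time per boundary and propagate reachability, i.e.\ $O(N^{2}\log k)$ per call, which is absorbed into the $O(N^{2}\log kN)$ per-iteration cost of intersection counting. The paper's proof states this explicitly (``If the funnels are precomputed in $O(kN^{2})$ time and space, then the decision problem can be solved in $O(N^{2}\log k)$ time''). You in fact set up the stored-funnel table for exactly this purpose, so the fix is local: replace the per-call charge of $O(kN^{2})$ by $O(N^{2}\log k)$, after which your summation yields the stated $O(kN^{2}+N^{2}\log kN\log N)$ expected time and $O(kN^{2})$ space, and your footnote variant remains correct as the no-storage alternative.
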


\begin{proof}
The $\delta_{C}$ optimization problem can be solved using red-blue
intersections. $O(\log N)$ steps are performed in the expected case
by Theorem \ref{theorem-geodesic}. Each step has to perform intersection
counting in $O(N^{2}\log kN)$ time and solve the decision problem.
If the funnels are precomputed in $O(kN^{2})$ time and space, then
the decision problem can be solved in $O(N^{2}\log k)$ time. Hence,
after $O(kN^{2})$ time and space preprocessing, $\delta_{C}$ can
be found in $O(\log N)$ expected steps where each step takes $O(N^{2}\log kN)$
time.
\end{proof}

\vskip-0.3cm
\section{Geodesic Hausdorff Distance\label{sec:Geodesic-Hausdorff-Distance}}

Hausdorff distance is a similarity metric commonly used to compare
sets of points or sets of line segments. The \emph{directed} geodesic
Hausdorff distance can be formally defined as $\tilde{\delta}_{H}(A,B)=\ \sup_{a\in A}\inf_{b\in B}d(a,b)$,
where $A$ and $B$ are sets and $d(a,b)$ is the geodesic distance
between $a$ and $b$ (see \cite{Alt1995,Alt2003}). The \emph{undirected}
geodesic Hausdorff distance is the larger of the two directed distances:
$\delta_{H}(A,B)=\max(\tilde{\delta}_{H}(A,B),\ \tilde{\delta}_{H}(B,A))$.

\begin{theorem}\label{theorem-geodesic-hausdorff}

$\delta_{H}(A,B)$ for point sets $A,B$ inside a simple polygon $P$
can be computed in $O((k+N)\log(k+N))$ time and $O(k+N)$ space,
where $N$ is the larger of the complexities of $A$ and $B$ and
$k$ is the complexity of $P$. If $A$ and $B$ are sets of line
segments, $\delta_{H}(A,B)$ can be computed in $O(kN^{2}\alpha(kN)\log kN)$
time and $O(kN\alpha(kN)\log kN)$ space.

\end{theorem}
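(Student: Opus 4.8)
The plan is to treat the point-set case and the segment-set case separately, in each case reducing the geodesic Hausdorff computation to a geodesic nearest-neighbor structure inside $P$. For the point-set case, first preprocess $P$ in $O(k)$ time for two-point shortest-path queries \cite{Guibas1989}. Then build a geodesic Voronoi diagram of the set $B$ inside $P$; this can be done in $O((k+n)\log(k+n))$ time and $O(k+n)$ space, and it supports geodesic nearest-neighbor queries for any query point in $O(\log(k+n))$ time. To compute $\tilde{\delta}_H(A,B)=\sup_{a\in A}\inf_{b\in B}d(a,b)$ I would simply query each of the $O(N)$ points of $A$ against the geodesic Voronoi diagram of $B$ and take the maximum; this costs $O(N\log(k+N))$. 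Doing the same with the roles of $A$ and $B$ swapped and taking the larger of the two directed distances yields $\delta_H(A,B)$. The total is $O((k+N)\log(k+N))$ time and $O(k+N)$ space as claimed. The main obstacle here is justifying that a geodesic Voronoi diagram inside a simple polygon can be built within the stated bounds; I would cite the known construction (Aronov-type geodesic Voronoi diagrams inside a simple polygon) rather than reprove it.

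For the segment-set case, the directed distance $\tilde{\delta}_H(A,B)=\sup_{a\in A}\inf_{b\in B}d(a,b)$ where now $A$ and $B$ are unions of line segments. The supremum over a segment $\overline{ab}\in A$ of the geodesic distance to the whole set $B$ is attained either at an endpoint of $\overline{ab}$ or at a point where the "distance-to-$B$" function along $\overline{ab}$ has a local maximum. The idea is to pair each of the $O(N)$ segments of $A$ with each of the $O(N)$ segments of $B$ and, using the hourglass/funnel machinery of section \ref{sub:Funnels-and-Hourglasses}, compute the geodesic distance function from the source segment to the target segment; by Lemma \ref{lemma-hourglass-bitonicity} this function is $\downarrow\uparrow$-bitonic and has $O(k)$ complexity. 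Taking the lower envelope over all $O(N)$ target segments for a fixed source segment gives a function whose complexity is governed by a Davenport-Schinzel bound, hence $O(kN\alpha(kN))$, computable in $O(kN\alpha(kN)\log kN)$ time via a standard divide-and-conquer lower-envelope computation; its maximum is then read off in linear time. Summing over the $O(N)$ source segments gives $O(kN^2\alpha(kN)\log kN)$ time and $O(kN\alpha(kN)\log kN)$ space for one directed distance; repeating with $A$ and $B$ swapped and taking the max finishes the undirected case within the same asymptotic bounds.

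The hard part will be the segment-set analysis: specifically, arguing that the lower envelope of the $O(N)$ bitonic, $O(k)$-complexity geodesic distance functions along a fixed source segment has near-linear (Davenport-Schinzel) complexity $O(kN\alpha(kN))$ and can be computed within the stated time, and that the supremum over the source segment is correctly captured by inspecting the vertices of this envelope together with the segment endpoints. I would handle this by observing that any two such distance functions cross $O(1)$ times — a consequence of the bitonicity in Lemma \ref{lemma-hourglass-bitonicity} combined with the fact that each function is piecewise of a fixed low-degree algebraic form (hyperbolic arcs determined by shortest-path structure) — so the union of all $O(kN)$ arcs forms a family to which the standard $\lambda_s(kN)$ bound applies. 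The remaining steps (two-point shortest-path preprocessing, funnel construction per segment pair in $O(k)$ time after preprocessing, and the final maximization) are routine given the earlier results of the paper and standard computational-geometry tools, so I would present them briefly and refer to \cite{Cook2007} for the detailed constants.
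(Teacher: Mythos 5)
Your proposal takes essentially the same route as the paper: a geodesic Voronoi diagram \cite{Papadopoulou1998} with nearest-neighbor queries for the point-set case, and for segments, a per-source-segment lower envelope of the $O(N)$ hourglass distance functions, bounded by Davenport--Schinzel arguments \cite{Agarwal1995} applied to the $O(kN)$ constituent arcs. One small caveat: bitonicity (Lemma \ref{lemma-hourglass-bitonicity}) alone does not yield $O(1)$ pairwise crossings between two such distance functions, but this does not matter since your arc-based argument (each function has $O(k)$ low-degree arcs, so the family of $O(kN)$ arcs admits the standard $\lambda_s(kN)$ envelope bound) is the correct justification and is exactly what the stated time and space bounds rest on.
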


\begin{proofsketch}A geodesic Voronoi diagram \cite{Papadopoulou1998}
finds nearest neighbors when $A$ and $B$ are point sets. When $A$
and $B$ are sets of line segments, all nearest neighbors for a line
segment can be found by computing a lower envelope \cite{Agarwal1995}
of $O(N)$ hourglass distance functions. The largest nearest neighbor
distance over all line segments is $\delta_{H}(A,B)$. %
{}\end{proofsketch}

% \vskip-0.3cm
\section{Conclusion}

To compute the geodesic Fréchet distance between two polygonal curves
inside a simple polygon, we have proven that the free space inside
a geodesic cell is $x$-monotone, $y$-monotone, and connected. By
extending the shortest path algorithms of \cite{Guibas1989,Hershberger1991},
the boundaries of a single free space cell can be computed in logarithmic
time, and this leads to an efficient algorithm for the geodesic Fréchet
decision problem.

A randomized algorithm based on red-blue intersections solves the
geodesic Fréchet optimization problem in lieu of the standard parametric
search approach. The randomized algorithm is also a practical alternative
to parametric search for the non-geodesic Fréchet distance in arbitrary
dimensions.%Future work could explore randomized alternatives to parametric search for higher dimensions.

We can compute the geodesic Fréchet distance between two polygonal
curves $A$ and $B$ inside a simple bounding polygon $P$ in $O(k+N^{2}\log kN\log N)$
expected time, where $N$ is the larger of the complexities of $A$
and $B$ and $k$ is the complexity of $P$. In the expected case,
the randomized optimization algorithm is an order of magnitude faster
than a straightforward parametric search that uses Cole's \cite{Cole1987}
optimization to sort $O(kN^{2})$ values.

The geodesic Fréchet distance in a polygonal domain with obstacles
enforces a homotopy on the leash. It can be computed in the same manner
as the geodesic Fréchet distance inside a simple polygon after computing
cell boundary funnels using Hershberger and Snoeyink's homotopic shortest
paths algorithm \cite{Hershberger1991}. Future work could attempt
to compute these funnels in $O(\log k)$ time instead of $O(k)$ time.
The geodesic Hausdorff distance for point sets inside a simple polygon
can be computed using geodesic Voronoi diagrams. The geodesic Hausdorff
distance for line segments can be computed using lower envelopes;
future work could speed up this algorithm by developing a geodesic
Voronoi diagram for line segments.


\begin{thebibliography}{10}

\bibitem{Agarwal1995}
P.~K. Agarwal and M.~Sharir.
\newblock Davenport--{S}chinzel sequences and their geometric applications.
\newblock Technical Report Technical report DUKE--TR--1995--21, 1995.

\bibitem{Agarwal1998a}
P.~K. Agarwal and M.~Sharir.
\newblock Efficient algorithms for geometric optimization.
\newblock {\em ACM Comput. Surv.}, 30(4):412--458, 1998.

\bibitem{Agarwal1994}
P.~K. Agarwal, M.~Sharir, and S.~Toledo.
\newblock Applications of parametric searching in geometric optimization.
\newblock volume~17, pages 292--318, Duluth, MN, USA, 1994. Academic Press,
  Inc.

\bibitem{Alt1995}
H.~Alt and M.~Godau.
\newblock Computing the {F}réchet distance between two polygonal curves.
\newblock {\em International Journal of Computational Geometry and
  Applications}, 5:75--91, 1995.

\bibitem{Alt2003}
H.~Alt, C.~Knauer, and C.~Wenk.
\newblock Comparison of distance measures for planar curves.
\newblock {\em Algorithmica}, 38(1):45--58, 2003.

\bibitem{Bespamyatnikh2004}
S.~Bespamyatnikh and M.~Segal.
\newblock Selecting distances in arrangements of hyperplanes spanned by points.
\newblock volume~2, pages 333--345, September 2004.

\bibitem{Buchin2006}
K.~Buchin, M.~Buchin, and C.~Wenk.
\newblock Computing the {F}réchet distance between simple polygons in
  polynomial time.
\newblock {\em SoCG: 22nd Symposium on Computational Geometry}, pages 80--87,
  2006.

\bibitem{Chambers2007}
E.~W. Chambers, É.~C. de~Verdière, J.~Erickson, S.~Lazard, F.~Lazarus, and
  S.~Thite.
\newblock Walking your dog in the woods in polynomial time.
\newblock {\em 17th Fall Workshop on Computational Geometry}, 2007.

\bibitem{Cole1987}
R.~Cole.
\newblock Slowing down sorting networks to obtain faster sorting algorithms.
\newblock {\em J. ACM}, 34(1):200--208, 1987.

\bibitem{Cook2007}
A.~F. {Cook IV} and C.~Wenk.
\newblock Geodesic {F}réchet and {H}ausdorff distance inside a simple polygon.
\newblock Technical Report CS-TR-2007-004, University of Texas at San Antonio,
  August 2007.

\bibitem{Duncan2006}
C.~A. Duncan, A.~Efrat, S.~G. Kobourov, and C.~Wenk.
\newblock Drawing with fat edges.
\newblock {\em Int. J. Found. Comput. Sci.}, 17(5):1143--1164, 2006.

\bibitem{Efrat2002}
A.~Efrat, L.~J. Guibas, S.~Har-Peled, J.~S.~B. Mitchell, and T.~M. Murali.
\newblock New similarity measures between polylines with applications to
  morphing and polygon sweeping.
\newblock {\em Discrete {\&} Computational Geometry}, 28(4):535--569, 2002.

\bibitem{Guibas1989}
L.~J. Guibas and J.~Hershberger.
\newblock Optimal shortest path queries in a simple polygon.
\newblock {\em J. Comput. Syst. Sci.}, 39(2):126--152, 1989.

\bibitem{Guibas1986}
L.~J. Guibas, J.~Hershberger, D.~Leven, M.~Sharir, and R.~E. Tarjan.
\newblock Linear time algorithms for visibility and shortest path problems
  inside simple polygons.
\newblock pages 1--13, 1986.

\bibitem{Guibas1987}
L.~J. Guibas, J.~Hershberger, D.~Leven, M.~Sharir, and R.~E. Tarjan.
\newblock Linear-time algorithms for visibility and shortest path problems
  inside triangulated simple polygons.
\newblock {\em Algorithmica}, 2:209--233, 1987.

\bibitem{Hershberger1991}
J.~Hershberger.
\newblock A new data structure for shortest path queries in a simple polygon.
\newblock {\em Inf. Process. Lett.}, 38(5):231--235, 1991.

\bibitem{Komlos}
J.~Koml\'{o}s, Y.~Ma, and E.~Szemer\'{e}di.
\newblock Matching nuts and bolts in {O}(n log n) time.
\newblock {\em SODA: 7th ACM-SIAM Symposium on Discrete Algorithms}, pages
  232--241, 1996.

\bibitem{Maheshwari2005}
A.~Maheshwari and J.~Yi.
\newblock On computing {F}réchet distance of two paths on a convex polyhedron.
\newblock {\em EWCG 2005}, pages 41--4, 2005.

\bibitem{Mitchell1998}
J.~S.~B. Mitchell.
\newblock Geometric shortest paths and network optimization.
\newblock {\em Handbook of Computational Geometry}, 1998.

\bibitem{Mitchell1987}
J.~S.~B. Mitchell, D.~M. Mount, and C.~H. Papadimitriou.
\newblock The discrete geodesic problem.
\newblock {\em SIAM J. Comput.}, 16(4):647--668, 1987.

\bibitem{Palazzi1994}
L.~Palazzi and J.~Snoeyink.
\newblock Counting and reporting red/blue segment intersections.
\newblock {\em CVGIP: Graph. Models Image Process.}, 56(4):304--310, 1994.

\bibitem{Papadopoulou1998}
E.~Papadopoulou and D.~T. Lee.
\newblock A new approach for the geodesic {V}oronoi diagram of points in a
  simple polygon and other restricted polygonal domains.
\newblock {\em Algorithmica}, 20(4):319--352, 1998.

\bibitem{Rote2005}
G.~Rote.
\newblock Computing the {F}r\'echet distance between piecewise smooth curves.
\newblock Technical Report ECG-TR-241108-01, May 2005.

\bibitem{Sarnak1986}
N.~Sarnak and R.~E. Tarjan.
\newblock Planar point location using persistent search trees.
\newblock {\em Commun. ACM}, 29(7):669--679, 1986.

\bibitem{Oostrum2002}
R.~van Oostrum and R.~C. Veltkamp.
\newblock Parametric search made practical.
\newblock {\em SoCG: 18th Symposium on Computational Geometry}, pages 1--9,
  2002.

\bibitem{Wenk2006}
C.~Wenk, R.~Salas, and D.~Pfoser.
\newblock Addressing the need for map-matching speed: Localizing global
  curve-matching algorithms.
\newblock {\em 18th Int'l Conf. on Sci. and Statistical Database Mgmt (SSDBM)},
  pages 379--388, 2006.

\end{thebibliography}
\end{document}